\newtheorem{theorem}{Theorem}[section]
\newtheorem{lemma}[theorem]{Lemma}
\newtheorem{proposition}[theorem]{Proposition}
\newtheorem{Remark}{Remark}[section]
\newtheorem{assumption}{Assumption}[section]
\begin{document}
\makeatletter
\def\@setauthors{%
\begingroup
\def\thanks{\protect\thanks@warning}%
\trivlist \centering\footnotesize \@topsep30\p@\relax
\advance\@topsep by -\baselineskip
\item\relax
\author@andify\authors
\def\\{\protect\linebreak}%
{\authors}%
\ifx\@empty\contribs \else ,\penalty-3 \space \@setcontribs
\@closetoccontribs \fi
\endtrivlist
\endgroup } \makeatother
 \baselineskip 19pt
\title[{{\tiny Optimal information acquisition for eliminating estimation risk}}]
 {{\tiny Optimal information acquisition for eliminating estimation risk}}
 \vskip 10pt\noindent
\author[{Zongxia Liang, Qi Ye}]
{\tiny {\tiny  Zongxia Liang$^{a}$, Qi Ye$^{b}$ }
 \vskip 10pt\noindent
{\tiny Department of
Mathematical Sciences, Tsinghua University, Beijing 100084, China }
  \footnote{\\
 $ a$ email: liangzongxia@tsinghua.edu.cn\\
 $ b$ Corresponding author, email:   yeq19@mails.tsinghua.edu.cn\\  
 Funding: This work was funded by National Natural Science Foundation of China (Grant Nos.12271290)}}
\maketitle
\noindent

\begin{abstract}
This paper diverges from previous literature by considering the utility maximization problem in the context of investors having the freedom to actively acquire additional information to mitigate estimation risk. We derive closed-form value functions using CARA and CRRA utility functions and establish a criterion for valuing extra information through certainty equivalence, while also formulating its associated acquisition cost. By strategically employing variational methods, we explore the optimal acquisition of information, taking into account the trade-off between its value and cost. Our findings indicate that acquiring earlier information holds greater worth in eliminating estimation risk and achieving higher utility. Furthermore, we observe that investors with lower risk aversion are more inclined to pursue information acquisition. \\ \\
Keywords: bayesian learning, estimation risk, filtering, information inquisition, the value of information, variation method\\ \\
JEL classification: G11, C11, C61
\end{abstract}

\section{Introduction}

In the financial market, investors are often assumed to possess complete information, allowing them to develop strategies that maximize their utility. However, in reality, obtaining complete information, especially regarding asset returns, is often uncertain, leading to estimation risk (\cite{kumar2008estimation}). \cite{gennotte1986optimal} and \cite{karatzas1991note} considers optimal consumption in an incomplete market setting, in \cite{lakner1995utility}, \cite{lakner1998optimal} and \cite{zohar2001generalized} the optimal terminal wealth is derived and the optimal strategy determined for linear Gaussian dynamics of the returns. \cite{karatzas2001bayesian} solves the problem if the return is a fixed random variable with known distribution. \cite{runggaldier2000stochastic} consider the optimization of the asymptotic growth rate for an infinite time horizon. \cite{sass2004optimizing} considers the return follows continuous time Markov chain.  To adapt to changing circumstances, investors continuously update their beliefs about asset returns over time as new asset prices arrive, employing Bayesian learning (\cite{liptser1977statistics}).

However, the estimation of asset returns is not solely updated based on asset prices. Investors have access to various additional sources of information, such as corporate earnings reports, macroeconomic indicators, political news or expert opinion (\cite{frey2012portfolio}). With the introduction of this extra information, it becomes intriguing to explore its impact on investment decisions and objective utility. While, most papers assume the information is given and usually it is of the same correlation of unknown part (\cite{xiong2010hetero}). Here, inspired by \cite{banerjee2020strategic}, we assume the investors can actively and dynamically to acquire the information in the market. While, this paper is originated by the classical paper \cite{kyle1985continuous} where the information is used to determine the ultimate payoffs's expectation and their object is to maximize the expected payoffs rather than the utility. But we still absorb the setting that we can acquire any information over time which can help to erase the estimation risk where the parameter is uncertain.

In this paper, we adopt the framework in \cite{karatzas2001bayesian} and equipped with extra information. We demonstrate that the correlated part of the extra information with the unobserved Brownian motion $W$ (the diffusion term of the risky asset's log price) can be considered as a small component of $W$. Importantly, this part is observed and can be utilized in return estimation. Consequently, the presence of extra information reduces estimation risk at a faster pace, leading to improved objective utility. Thus, this paper not only addresses the maximization of expected utility with any given extra information but also investigates the nature of the extra information itself, ultimately determining optimal information acquisition strategies.

We reveal that the extra information with higher correlation at any time can actually achieve higher utility. However, comparing utilities equipped with any extra information still requires solving the value function. To facilitate this analysis, we introduce the concept of an ``informative clock", which represents the ratio of the conditional variance of estimation to the risky asset's volatility. The informative clock quantifies the total volume of information and provides an intuitive index to assess the quality of the extra information. This perspective enhances the precision of estimation and allows for a concise presentation of the value function. It provides a more intuitive understanding of how extra information influences investment decisions and yields fruitful insights from various perspectives, such as attention, inattention, and information quality (\cite{huang2007rational} and \cite{veronesi2000does}).

We propose a criterion to quantify the value of extra information. Inspired by \cite{cabrales2013entropy} and \cite{kadan2019estimating}, we adopt the concept of equivalence of certainty. It means that taking endowment along with extra information can achieve the same utility as the case of taking endowment adding the value of information along with no extra information.  Our calculations demonstrate that the value of extra information is relatively independent of market conditions, including interest rates, volatility rates, and return estimates. Instead, it primarily depends on risk aversion and the informative clock. We show that the value of extra information remains constant across different initial wealth levels in the case of Constant Absolute Risk Aversion (CARA), providing a fixed value regardless of the endowment. Conversely, in the case of Constant Relative Risk Aversion (CRRA), investors focus on the multiple of wealth, where the value of information contributes a fixed proportional increase to the endowment. However, information acquisition and processing can incur significant costs in terms of time, effort, or expenses. To address this, we propose a penalty function for information acquisition based on the informative clock.

At last, we assume that any information characterized by an informative clock can be obtained in the market. Investors can strategically determine the optimal effort required to acquire such information for their investment decisions. We employ the variation method to solve the functional associated with the informative clock and derive its necessary condition. Our findings indicate that investors pay greater attention to information acquisition in the early stages rather than later stages and investors with less risk aversion tend to pay greater attention.

In summary, this paper makes three primary contributions. First, we establish a framework based on the concept of the "informative clock," enabling the concise solution of the value function. Second, we employ the concept of equivalence of certainty to determine the value of information. Lastly, we solve the optimal information acquisition strategy.

The remainder of this paper is structured as follows: Section 2 describes the market model and formulates the admissible space for investment strategies. Section 3 derives the value function and corresponding strategy under the assumption of given extra information, utilizing the concept of the informative clock,  presents the core analysis of the impact of extra information on investment decisions, evaluating its value and cost. Section 4 complements the content of information's source and informative clock's explosion. Lastly, the conclusion is drawn in the final section, with technical proofs and calculations provided in the appendix.

\vskip 15pt

\section{\bf Market Model and Problem Formulation}
 Suppose that  $(\Omega,\mathbb{F},\mathbb{P})$ is  a probability space equipped with the filtration $\mathbb{F}= \{\mathcal{F}_t: t\geq 0\} $ satisfying the usual conditions. We consider a financial market with one risk-free asset and one risky asset. The price of the risk-free asset $S_0=\{S_0(t), \ t\geq 0\}$ is given by
\begin{align*}
dS_0(t) = rS_0(t)dt, \ \ S_0(0)=1,
\end{align*}
and the price of the risky asset  $S=\{S(t), \ t\geq 0\}$
follows the stochastic differential equation (abbr.  SDE):
\begin{align*}
dS(t) = S(t)\left [\mu dt+\sigma dW_t\right ],\ \  S(0)=1,
\end{align*}
 where $W=\{ W_t: {t\geq 0}\}$ is a  Brownian motion  on $(\Omega,\mathbb{F},\mathbb{P})$,  $r$ and $\sigma$ are nonnegative constants. The drift term $\mu$ is an $\mathcal{F}_0$-measurable Gaussian random variable satisfying $\mu \sim \mathcal{N}(\mu_0,\sigma_0^2)$, and $\mu$ and $W$ are independent.
\vskip 5pt
The investor allocates the amounts $\pi_t$ of her wealth  into the risky asset at time $t$. Then the investor's total wealth as the self-financing process $X=\{ X_t,\ t\geq 0\}$ follows the following SDE:
\begin{align}
dX_t=rX_t+\pi_t(\mu-r)dt+\pi_t \sigma dW_t,\ \ t\geq 0, \ \  X_0=x_0.
\end{align}
\vskip 5pt
Traditionally, the investor formulates her investment strategy by observing the price of risky asset, i.e., the strategy $\pi=\{\pi_t:\ t\geq 0\}$ must be only adapted to the filtration $\mathbb{F}^S=\{\mathcal{F}^S_t: t\geq 0\}$, where $\mathcal{F}^S_t=\sigma \{ S_u: 0\leq u\leq t \}$, which is strictly smaller than $\mathcal{F}_t$, $ t\geq 0$.
\vskip 5pt
To capture the cross-sectional feature of unobserved process $W$, we introduce an extra process $m=\{m_t: t\geq 0\}$ in our model. It has to have some properties: it is observable, it is both Markov process and martingale from the background information which is independent to random variable $\mu$. The most valuable thing is that the process has correlation with the Brownian motion $W$:
\begin{align}\label{correlation}
\frac{d\langle m, W\rangle_t}{\sqrt{d\langle m \rangle_t d\langle W \rangle_t}}=\rho(t),
\end{align}
where $\rho(t)\in (-1,1)$ is determined function of time $t$. it is easy to see that $\rho(t)$ refers the correlation coefficient of two random variable $dW_t$ and $dm_t$ in the slight time from $t$ to $t+dt$. For  (\ref{correlation}) is well-defined, we  set  $\rho(t)=0$ when $\frac{d\langle m \rangle_t}{dt} = 0$.
\vskip 5pt
Now the investor can no longer only settle the strategy based on the observation of $S$, the process $m$ will be likely to provide more information for the more accurate estimation of $\mu$ where the ambiguity comes from. %The detailed analyst will be given in the next section. 
Mathematically speaking, we amplify the space of our investment strategy that $\pi$ can be adapted to the filtration $\mathbb{F}^S \vee \mathbb{F}^m$ rather than only adapted to the filtration $\mathbb{F}^S$. And we denote the space of admissible strategy as $\mathcal{A}_m:=\{ \pi | \int_0^T \pi_t^2 dt<\infty, \pi \ \mbox{is adapted to } \mathbb{F}^S \vee \mathbb{F}^m\}$, where $T>0$ refers the objective investment terminal time.
\vskip 5pt
The previous article's objective is to maximize the expectation utility of the terminal wealth at time $T$ over $\mathcal{A}_m$:
\begin{align}\label{solution2.3}
\sup_{\pi \in \mathcal{A}_m} EU(X_T),
\end{align} 
with the given extra information.

Our primary objective is to determine the optimal information to acquire in the market. To achieve this, we first need to establish what we mean by ``optimal''.  We must define a criterion to value the extra information, denoted as $Value(m)$, and formulate a cost function, denoted as $Cost(m)$, to quantify the expenses associated with acquiring this extra information. Consequently, our objective is to solve the optimal information acquisition strategy that maximizes the net value, defined as $Net(m) = Value(m) - Cost(m)$.

However, determining the criterion to value $Value(m)$ is not a straightforward task. One possible approach is to use equation (\ref{solution2.3}) to represent the value $Value(m)$. However, this approach has limitations. For instance, it may lead to negative values for $Value(m)$, which is counterintuitive. Additionally, comparing the value of information for investors with different risk aversions becomes challenging, as different utility functions cannot be directly compared. Despite these challenges, we still need to utilize the insights from equation (\ref{solution2.3}) to establish a criterion for valuing $Value(m)$, as a larger value in equation (\ref{solution2.3}) indicates greater value for the information. The specific criterion will be provided in a later section.

\section{\bf Extra information's value, cost and its optimal acquisition}

\subsection{The effect of correlation coefficient} The correlation coefficient of the extra information can be comprehended as a part of the unobserved $W$. A natural idea comes about that higher correlation coefficient means more valuable information of $W$. Then it can be conjectured that if two different sources of extra information have the same correlation coefficient with $W$, two corresponding optimal investment would achieve the same expected utility of terminal wealth. Similarly, if one extra information has the bigger correlation coefficient than the other one, then its corresponding expected utility would be bigger than the other. We now show the two conjectures
in the following Propositions \ref{Proposition31} and \ref{Proposition32}.

\begin{proposition}\label{Proposition31}
If there are two sources of extra information $m^1$ and $m^2$ sharing the same correlation coefficient that $\rho^1(t)=\rho^2(t)$ for all $t$, where $\rho^1(t):=\frac{d\langle m^1, W\rangle_t}{\sqrt{d\langle m^1 \rangle_t d\langle W \rangle_t}}$ and $\rho^2(t):=\frac{d\langle m^2, W\rangle_t}{\sqrt{d\langle m^2 \rangle_t d\langle W \rangle_t}}$. Denote $V_{m^1}:=\sup\limits_{\pi \in \mathcal{A}_{m^1}} EU(X_T)$ and $V_{m^2}:=\sup\limits_{\pi \in \mathcal{A}_{m^2}} EU(X_T)$,  similarly,  the superscript $1,2$ are indicating corresponding content in the case with information $m^1$ and $m^2$ (see (\ref{solution2.3})). Then $V_{m^1}=V_{m^2}$.
\end{proposition}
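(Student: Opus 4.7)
The intuition is that the only feature of $m^i$ relevant to the optimization is the piece of information it carries about the hidden Brownian motion $W$, and that piece is parametrized entirely by the correlation coefficient $\rho^i(\cdot)$. My plan is to standardize both $m^1$ and $m^2$ to a canonical form, show the resulting pairs with $W$ have the same joint law, and then transfer admissible strategies between the two problems.

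First, I would normalize each $m^i$ to a standard Brownian motion via L\'evy's characterization. Writing $c^i(t):=\sqrt{d\langle m^i\rangle_t/dt}$, which is a deterministic function of $t$ (inherited from $m^i$ being a Markov martingale whose background is independent of $\mu$), define $\tilde m^i_t := \int_0^t (c^i(s))^{-1}\,dm^i_s$ on $\{c^i>0\}$ and extend on $\{c^i=0\}$ by an independent auxiliary Brownian motion. Then $\tilde m^i$ is a standard Brownian motion, $\mathbb{F}^{m^i}=\mathbb{F}^{\tilde m^i}$ because $c^i$ is deterministic, and a direct computation gives $d\langle\tilde m^i, W\rangle_t=\rho^i(t)\,dt=\rho(t)\,dt$.

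Next, $(W,\tilde m^i)$ is a two-dimensional continuous Gaussian martingale whose law is determined by the variances $\mathrm{Var}(W_t)=\mathrm{Var}(\tilde m^i_t)=t$ and the covariance $\mathrm{Cov}(W_s,\tilde m^i_t)=\int_0^{s\wedge t}\rho(u)\,du$, all of which depend only on $\rho(\cdot)$. Hence $(W,\tilde m^1)\stackrel{d}{=}(W,\tilde m^2)$, and since $\mu$ is independent of the backgrounds generating both $\tilde m^i$, also $(\mu,W,\tilde m^1)\stackrel{d}{=}(\mu,W,\tilde m^2)$. Because the risky-asset price $S$ is a measurable path functional of $(\mu,W)$, any admissible $\pi^1\in\mathcal{A}_{m^1}$ can be written as a predictable path functional $\pi^1_t=\phi_t(S_\cdot,\tilde m^1_\cdot)$; setting $\pi^2_t:=\phi_t(S_\cdot,\tilde m^2_\cdot)$ puts $\pi^2\in\mathcal{A}_{m^2}$, and the joint-law equality yields $\mathrm{E}U(X^{\pi^2}_T)=\mathrm{E}U(X^{\pi^1}_T)$. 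Taking suprema gives $V_{m^1}\leq V_{m^2}$, and the symmetric argument closes the equality.

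The main obstacle I expect is the careful justification of the filtration identification $\mathbb{F}^{m^i}=\mathbb{F}^{\tilde m^i}$ and of the path-functional representation of admissible strategies, especially on the set $\{c^i=0\}$ where (by the convention following (\ref{correlation})) $\rho\equiv 0$, so that the extension of $\tilde m^i$ there is informationally irrelevant. Once the measurability details are dispatched, the remainder is a clean change-of-variables argument based on L\'evy's theorem and the Gaussian structure of $(W,\tilde m^i)$.
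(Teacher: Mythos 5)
Your proposal is correct and follows essentially the same route as the paper's Appendix~A: normalize $m^i$ to Brownian motions, identify the joint law of the observable pair (the paper uses $(Y,m^i)$ via characteristic functions, you use $(\mu,W,\tilde m^i)$ via the Gaussian-martingale structure, which is the same fact), represent the optimal/near-optimal strategy as a path functional and transfer it to the other problem, and conclude by symmetry of the supremum inequality. The measurability caveats you flag are likewise left implicit in the paper.
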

\vskip 5pt
\begin{proof}
	See Appendix A.
\end{proof}

Proposition \ref{Proposition31} tells us that not the exact form of the extra information but its correlation with $W$ indeed affect the objective function, thus it  provides  us some innate feeling that it is fine to ignore the exact form of extra information.

\vskip 5pt
\begin{Remark}
Here we can assume  $d\langle m \rangle_t=dt$. It is the reason that the class of the process $\left \{(H\cdot m)_t=\int_0^t H(s)dm_s \right.$, $H$ is a certain determined function with respect to time t and $H(t)\neq 0$ almost everywhere$\big\}$ shares the same information as $m$, i.e., $\mathbb{F}^m=\mathbb{F}^{(H\cdot m)}$. Moreover, this transformation makes sure (\ref{correlation}) still valid.  Then we can treat $m$ is a Brownian motion and (\ref{correlation}) is reduced into a simplified form as $d\langle m, W\rangle_t=\rho(t)dt$. In addition, we can assume $\rho(t)\geq 0$ if we replace new process $m'=\left\{m'_t:=\int_0^t \left[1_{\{\rho(t)\geq 0\}}\!\!-\!\!1_{\{\rho(t)< 0\}}\right ] dm_s\right\}$ with the process $m$ to assure the nonnegative and this setting is applied in the later content.	
\end{Remark}
\vskip 5pt

\begin{proposition}\label{Proposition32}
If there are two sources of extra information $m^1$ and $m^2$ sharing the correlation coefficient with the relation that $\rho^1(t) \geq \rho^2(t)$ for $ \forall$ $t$, then $V_{m^1} \geq V_{m^2}$.
\end{proposition}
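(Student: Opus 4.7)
My plan is to use Proposition~\ref{Proposition31} together with a coupling construction: I would realize an equivalent version of $m^2$ as a function of $m^1$ and an independent Brownian motion $B$, so that its filtration is dominated by that of the pair $(m^1,B)$; it then remains to show that the exogenous noise $B$ cannot help.

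First, by Proposition~\ref{Proposition31} and the Remark I may assume that $m^1$ and $m^2$ are standard Brownian motions with $0\le\rho^2(t)\le\rho^1(t)<1$. On an enlargement of the probability space carrying a Brownian motion $B$ independent of $(\mu,W,m^1)$, I would define
\begin{align*}
\tilde m^2_t := \int_0^t \alpha(s)\,dm^1_s + \int_0^t \sqrt{1-\alpha(s)^2}\,dB_s,
\end{align*}
where $\alpha(s):=\rho^2(s)/\rho^1(s)$ on $\{\rho^1(s)>0\}$ and $\alpha(s):=0$ otherwise (note that $\rho^1(s)=0$ forces $\rho^2(s)=0$, so this is well-posed). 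A short computation yields $d\langle\tilde m^2\rangle_t=dt$ and $d\langle\tilde m^2,W\rangle_t=\alpha(t)\rho^1(t)\,dt=\rho^2(t)\,dt$, so by L\'evy's characterization $\tilde m^2$ is a standard Brownian motion correlated with $W$ exactly as $m^2$ is. Proposition~\ref{Proposition31} then gives $V_{\tilde m^2}=V_{m^2}$, and since $\tilde m^2$ is adapted to $\mathbb{F}^{m^1}\vee\mathbb{F}^B$, the inclusion $\mathcal{A}_{\tilde m^2}\subseteq\mathcal{A}_{(m^1,B)}$ yields $V_{m^2}\le V_{(m^1,B)}$.

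The main obstacle I anticipate is proving $V_{(m^1,B)}=V_{m^1}$, i.e., that $B$ is worthless. One direction is trivial from $\mathcal{A}_{m^1}\subseteq\mathcal{A}_{(m^1,B)}$. For the reverse, I would disintegrate along $B$: for any $\pi\in\mathcal{A}_{(m^1,B)}$ and any fixed $\omega_B$, the strategy $\pi^{\omega_B}_t(\omega):=\pi_t(\omega,\omega_B)$ is $\mathcal{F}_t^S\vee\mathcal{F}_t^{m^1}$-adapted (freezing $B$ removes its randomness from the filtration), and since the wealth equation is solved path-by-path in $\pi$ against $(\mu,W)$, one has $X_T^{\pi}(\omega,\omega_B)=X_T^{\pi^{\omega_B}}(\omega)$. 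By Fubini,
\begin{align*}
\E U(X_T^{\pi}) = \int\!\Bigl(\int U(X_T^{\pi^{\omega_B}}(\omega))\,d\Pro(\omega)\Bigr)\,d\Pro_B(\omega_B) \le \int V_{m^1}\,d\Pro_B(\omega_B) = V_{m^1},
\end{align*}
using that $\pi^{\omega_B}\in\mathcal{A}_{m^1}$ for $\Pro_B$-a.e.\ $\omega_B$ (the $L^2$ admissibility is inherited from $\pi$ by Fubini). Taking the supremum over $\pi$ delivers $V_{(m^1,B)}\le V_{m^1}$. Chaining the inequalities then gives $V_{m^2}=V_{\tilde m^2}\le V_{(m^1,B)}=V_{m^1}$, as required. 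The delicate points to write out carefully are the L\'evy verification for $\tilde m^2$, the measurable-selection check that $\pi^{\omega_B}$ is $\mathcal{F}_t^S\vee\mathcal{F}_t^{m^1}$-adapted, and the pathwise identification of the It\^o integral across the product space, all of which reduce to standard measure-theoretic manipulations.
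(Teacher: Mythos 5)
Your construction is the same as the paper's at its core: you build $\tilde m^2$ (the paper's $m^3$) by mixing $m^1$ with an independent Brownian noise so as to replicate the correlation $\rho^2$, invoke Proposition~\ref{Proposition31} to get $V_{\tilde m^2}=V_{m^2}$, and use the inclusion of admissible sets to reduce everything to showing that the independent noise is worthless. Where you genuinely diverge is in the proof of that last step. The paper proves it through the filtering machinery: a conditional-expectation lemma gives $E[\mu\,|\,\mathcal{F}_t^S\vee\mathcal{F}_t^{m^1}\vee\mathcal{F}_t^{W^{noise}}]=E[\mu\,|\,\mathcal{F}_t^S\vee\mathcal{F}_t^{m^1}]$, so the posterior, the innovation process and the sufficient statistic $Z$ are unchanged, and the value function --- which depends on the model only through $Z$ --- is the same. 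You instead disintegrate an arbitrary strategy along the noise paths and apply Fubini, bounding each conditional expected utility by $V_{m^1}$. Both routes are valid. Yours is more self-contained: it does not presuppose that the value depends only on the sufficient statistic, a fact the paper only fully justifies later via the HJB verification, so your argument is arguably tighter at exactly the point where the paper is terse (``using the same method \dots\ thus (3.14) holds''). The price is the measure-theoretic overhead you correctly flag: product-measurability and progressive measurability of the sections $\pi^{\omega_B}$, the pathwise identification of the It\^o integral on the product space (which rests on the independence of $B$ from $(\mu,W,m^1)$), and a positive/negative-part splitting in Fubini since $U$ may be unbounded below. The paper's route is shorter once Theorem~\ref{thm31} is in hand and makes the economic content --- the noise carries no information about $\mu$ --- explicit.
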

\vskip 5pt
\begin{proof}
	See Appendix B.
\end{proof}

Proposition \ref{Proposition32} implies that higher correlation coefficient means the extra information is more valuable. Beyond that, we first put forward the idea about information dilution as the form $m^3$ being the mixture of the original information $m^1$ and a noise $W^{noise}$,  indeed, $W^{noise}$ has no use at all. Because $m^2$ contributes the same estimation of $\mu$ as $m^3$,  it contributes less than that $m^1$ contributes. So does it in the investment strategy.
\vskip 5pt
\begin{Remark}
The method proving Proposition \ref{Proposition32} corresponds the setting that $\rho(t)=0$ if $\frac{d \langle m \rangle_t}{dt}= 0$, which is explained as follows: When the equation holds, it means $dm_t=0$ at that instant. This implies that no valuable information is available at that instant. If we define the process $m'$:\!\! $m'_t=\int_0^t [1_{\{\frac{d \langle m \rangle_t}{dt}\neq 0\} }(s) dm_s+1_{\{\frac{d \langle m \rangle_t}{dt}= 0\} }(s) dW^{noise}_s ]$, as that both no information is equal to noise itself benefit nothing to strategy, then $m$ and $m'$ have the same effect to the process and investment.
\end{Remark}
\vskip 10pt

The extra information would bring about huge advantage for our strategy as what the former content reveal. But we still need to objectively measure the value of the extra information by a criterion rather than comparing two different extra information. Therefore, it is necessary to calculate Problem (\ref{solution2.3}) for preparation.

\subsection{\bf Optimal investment Problem (\ref{solution2.3}) with given extra information}
 The solution of control problem with ambiguity is usually based on bayesian learning method that we find the posterior probability distribution evolves over time by the data available up to each instant. Filtering method serves as an instrumental appliance to transform the estimation of parameter into the evolution of stochastic differential equation, which assists us in further analysis (\cite{liptser1977statistics}). Among them, the most widespread filtering is Kalman Bucy filtering, however, it can not be used directly. With the extra process $m$, the estimation of $\mu$ over time can not be given directly by two equations of conditional mean and variance as there must be aggregation of two sources of information. It is necessary to find the sufficiency statistic generated by $S$ and $m$ to fully estimate $\mu$.
\vskip 5pt
 In order to assure the deduction below carried out successfully, we need the following assumption on the extra observable process $m$.
\vskip 10pt
\begin{assumption}\label{assumption}
\begin{align}
\int_0^T \frac{1}{1-\rho^2(s)}ds<\infty.
\end{align}
\end{assumption}
\vskip 5pt
In the later subsection, we will illustrate the meaning of Assumption \ref{assumption} and discuss what happens in the absence of the assumption. We now present the following conditional expectation $E[\mu|\mathcal{F}_t^S\vee \mathcal{F}_t^m]$.
\vskip 5pt
\begin{theorem}\label{thm31}
\begin{align}
E[\mu|\mathcal{F}_t^S\vee \mathcal{F}_t^m]=\frac{1}{2}\sigma^2+\frac{y_0+\int_0^t q(t)^2 \left[dY_s-\sigma \rho(s)dm_s\right]}{t_0+\int_0^t q(s)^2 ds},
\end{align}
where $t_0:=\frac{\sigma^2}{\sigma_0^2}$, $y_0:=(\mu_0-\frac{1}{2}\sigma^2)t_0$ and $q(s):=\frac{1}{\sqrt{1-\rho(s)^2}}$, $Y$  is defined by (\ref{Y}) in Appendix C.
\end{theorem}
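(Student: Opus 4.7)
\textbf{Proof plan for Theorem \ref{thm31}.} The strategy is to convert the joint observation of $(S,m)$ into a standard Kalman--Bucy signal-in-noise problem for the unknown constant $\nu := \mu - \tfrac12 \sigma^{2}$, which is Gaussian with mean $\mu_{0}-\tfrac12\sigma^{2}$ and variance $\sigma_{0}^{2}$.

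First I would apply Itô's formula to get $d(\log S_{t}) = \nu\,dt + \sigma\,dW_{t}$, and set $Y_{t} := \log S_{t}$ (this is presumably the definition in Appendix C). Then observe that $\mathbb{F}^{S}=\mathbb{F}^{Y}$. Next, using the fact that $m$ may (by the preceding Remark) be taken to be a Brownian motion with $d\langle m,W\rangle_{t}=\rho(t)\,dt$, I would decompose $W$ orthogonally relative to $m$: set
\begin{equation*}
B_{t} := \int_{0}^{t} q(s)\bigl[\,dW_{s}-\rho(s)\,dm_{s}\bigr],
\end{equation*}
so that $d\langle B\rangle_{t}=dt$ and $d\langle B,m\rangle_{t}=0$. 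By Lévy's characterisation $B$ is a Brownian motion independent of $m$; since $\mu$ is independent of $(W,m)$, it is also independent of $B$. Assumption \ref{assumption} guarantees that $q(\cdot)^{2}$ is integrable on $[0,T]$, so $B$ is well defined up to the horizon.

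With this decomposition, substituting $dW_{s}=\rho(s)\,dm_{s}+\sqrt{1-\rho^{2}(s)}\,dB_{s}$ into $dY_{s}$ gives
\begin{equation*}
dY_{s}-\sigma\rho(s)\,dm_{s} \;=\; \nu\,ds + \sigma\sqrt{1-\rho^{2}(s)}\,dB_{s}.
\end{equation*}
Call the left-hand side $dZ_{s}$. Then observing $(Y,m)$ up to $t$ is the same as observing $(Z,m)$, so $\mathcal{F}_{t}^{S}\vee\mathcal{F}_{t}^{m}=\mathcal{F}_{t}^{Z}\vee\mathcal{F}_{t}^{m}$. Because the triple $(\nu,B)$ is independent of $m$ and $Z$ is a functional of $(\nu,B)$, the $\sigma$-algebra $\sigma(\nu)\vee\mathcal{F}_{t}^{Z}$ is independent of $\mathcal{F}_{t}^{m}$; hence $E[\nu\mid\mathcal{F}_{t}^{S}\vee\mathcal{F}_{t}^{m}]=E[\nu\mid\mathcal{F}_{t}^{Z}]$, i.e.\ $m$ contributes to the filtration exactly through the subtraction of the $\sigma\rho(s)\,dm_{s}$ term in $dZ_{s}$.

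Finally, I would apply the standard Kalman--Bucy filter for a constant Gaussian signal observed as $dZ_{s}=\nu\,ds+\sigma\sqrt{1-\rho^{2}(s)}\,dB_{s}$. The posterior of $\nu$ is Gaussian with precision $\sigma_{0}^{-2}+\int_{0}^{t}\sigma^{-2}q(s)^{2}\,ds$ and the posterior mean
\begin{equation*}
E[\nu\mid\mathcal{F}_{t}^{Z}]\;=\;\frac{(\mu_{0}-\tfrac12\sigma^{2})/\sigma_{0}^{2}+\int_{0}^{t}q(s)^{2}\,dZ_{s}/\sigma^{2}}{1/\sigma_{0}^{2}+\int_{0}^{t}q(s)^{2}\,ds/\sigma^{2}}.
\end{equation*}
Multiplying top and bottom by $\sigma^{2}$ turns the denominator into $t_{0}+\int_{0}^{t}q(s)^{2}\,ds$ and the numerator into $y_{0}+\int_{0}^{t}q(s)^{2}\,dZ_{s}$, and then adding $\tfrac12\sigma^{2}$ gives the claimed formula for $E[\mu\mid\mathcal{F}_{t}^{S}\vee\mathcal{F}_{t}^{m}]$. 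The main obstacle will be the careful filtration bookkeeping in step three—showing rigorously that $\mathbb{F}^{Z}\vee\mathbb{F}^{m}=\mathbb{F}^{S}\vee\mathbb{F}^{m}$ and that $m$ is truly uninformative about $\nu$ given $Z$; once that independence is secured, the filtering step is classical and the role of Assumption \ref{assumption} is merely to keep $q(\cdot)^{2}$ integrable so that $B$, $Z$ and the filter equations are all well posed.
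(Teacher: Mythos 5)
Your proposal is correct and follows essentially the same route as the paper's Appendix C: the same orthogonal decomposition $\int_0^t q(s)[dW_s-\rho(s)\,dm_s]$ (your $B$ is the paper's $n$), the same transformed observation $dY_s-\sigma\rho(s)\,dm_s=\nu\,ds+\sigma q(s)^{-1}dB_s$, and then the conjugate-Gaussian update — which the paper carries out by writing the Wonham/Bayes posterior density explicitly rather than invoking the Kalman--Bucy filter, but the computation is identical. Your explicit justification that $m$ is uninformative given $Z$ (via independence of $(\nu,B)$ from $m$) is a point the paper leaves implicit, so it is a welcome addition rather than a deviation.
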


\begin{proof}
 See Appendix C.
\end{proof}
\vskip 5pt

 We find from the form of (\ref{posterior distribution}) in Appendix C that the conditional distribution of  $\mu $ given $ \mathcal{F}_t^S\vee \mathcal{F}_t^m$ is still a Gaussian, i.e.,
\begin{align}
	&\ \ \ \mu | \mathcal{F}_t^S\vee \mathcal{F}_t^m\\ \nonumber
	 &\sim \mathcal{N}\left (\frac{1}{2}\sigma^2+\frac{y_0+\int_0^t q(s)^2 [dY_s-\sigma \rho(s)dm_s]}{t_0+\int_0^t q(s)^2 ds},\right. 
	\left. \frac{\sigma^2}{t_0+\int_0^t q(s)^2 ds} \right).
\end{align}

To solve Problem (\ref{solution2.3}), based on Theorem \ref{thm31},  we introduce a new process $Z=\{Z_t,\  t\geq 0\}$ and the innovation process $\bar{W}=\{\bar{W}_t,\  t\geq 0\}$ as follows:
\begin{align*}
Z_t&:=\frac{1}{2}\sigma^2+\frac{y_0+\int_0^t q(s)^2 [dY_s-\sigma \rho(s)dm_s]}{t_0+\int_0^t q(s)^2 ds},\\
\bar{W}_t&:=\int_0^t \sigma^{-1}q(s) \left\{dY_s-\sigma \rho(s)dm_s-[E[\mu|\mathcal{F}_s^S\vee \mathcal{F}_s^m]-\frac{1}{2}\sigma^2]ds\right\}.
\end{align*}
Then $Z$ is adapted to the filtration $\{\mathcal{F}_t^S\vee \mathcal{F}_t^m: t\geq 0\}$ and  a sufficient statistic for estimating $\mu$ in this filtration, and $\bar{W}$ is a Brownian motion adapted to the filtration $\{\mathcal{F}_t^S\vee \mathcal{F}_t^m: t\geq 0\}$. Moreover, it is tested that $\bar{W}$  is independent to the Brownian motion $m$.

 Based on the definition of Brownian motion $\bar{W}$ and the explicit form of $E[\mu|\mathcal{F}_t^S\vee \mathcal{F}_t^m]$, we obtain that the  processes $X$, $Y$ and $Z$ satisfy the following SDEs:
\begin{align}
dX_t&=rX_t dt+\pi_t(Z_t-r) dt+\pi_t \sigma \rho(t)dm_t+\pi_t \sigma q(t)^{-1} d\bar{W}_t\\
dY_t&=(Z_t- \frac{1}{2}\sigma^2)dt+\sigma \rho(t)dm_t+\sigma q(t)^{-1} d\bar{W}_t,\\
dZ_t&=\frac{\sigma q(t) d\bar{W}_t}{t_0+\int_0^t q(s)^2 ds}.
\end{align}
\vskip 5pt
 Using the Markov property of the processes $X$ and $Z$, and the fact that $Z$ is a sufficient statistic of $\mu$, we establish  the following expected utility maximization problem  given extra information:
\begin{align}\label{hjbvalue}
V(t,x,z)=\sup_{\pi\in \mathcal{A}_m} E[U(X_T) | X_t=x, Z_t=z].
\end{align}
\vskip 10pt
It is easy to see that Problem (\ref {solution2.3}) is equivalent to Problem (\ref {hjbvalue}) with the initial state $(0,x_0,\mu_0)$.
\vskip 5pt
\begin{Remark}
 If we define $V(t,x,y,m):=\sup\limits_{\pi\in \mathcal{A}_m} E[U(X_T) | X_t=x, Y_t=y, m_t=m]$. The state value of $Y_t$ and $m_t$ can not give the estimation of $\mu$ as the entire process of $Y$ and $m$ has capability. Then the dynamic programming no longer hold for each instant the state's failure to estimate $\mu$. What's more that the value function $V(t,x,y,m)$ is not well-defined. That is why we must construct the process $Z$ to overcome this obstacle.
\end{Remark}
\vskip 5pt

To solve Problem (\ref {hjbvalue}), based on the definition of $V(t,x,z)$ and dynamic programming principle,
we start with analyzing the Hamilton-Jacobi-Bellman (HJB) equation satisfied by the value function $ V(t,x,z) $ as follows: \begin{align}\label{HJB}
\sup_\pi \mathcal{L}^\pi V(t,x,z)=0, \ \ \ V(T,x,z)=U(x),
\end{align}
where
\begin{align*}
\mathcal{L}^\pi V(t,x,z)&:=V_t+[rx+\pi(z-r)]V_x+\frac{1}{2}\sigma^2 \pi^2 V_{xx}\\
&+\sigma^2 \pi \frac{1}{t_0+\int_0^t q(s)^2 ds}V_{xz}\\
&+\frac{1}{2}q(t)^2\sigma^2\frac{1}{[t_0+\int_0^t q(s)^2 ds]^2} V_{zz}.
\end{align*}
We define a functional
\begin{align}\label{tau}
	\tau(\cdot):=t_0+\int_0^\cdot q(s)^2 ds.
\end{align}
 The explanation of the functional (\ref{tau}) will be given later. By using $\tau(\cdot)   $, the infinitesimal generator can be rewritten as the following  concise form:
\begin{align}\label{newinfini}
\mathcal{L}^\pi V(t,x,z)&:=V_t+[rx+\pi(z-r)]V_x+\frac{1}{2}\sigma^2 \pi^2 V_{xx}\nonumber\\
&+\sigma^2 \pi \frac{1}{\tau(t)}V_{xz}+\frac{1}{2}\sigma^2(-\frac{1}{\tau})'(t) V_{zz}.
\end{align}
Now we derive the  closed-form solutions of Problem (\ref {hjbvalue}) given extra information with CARA and CRRA utility functions, which is the foundation of the further discussion about the extra information and  the optimal information acquisition in Section 4.
\vskip 5pt
\begin{theorem}\label{valuefunction}
	(1) If $U(x)=-\frac{1}{\beta}e^{-\beta x}$ (CARA case), then we have
	\begin{align*}
		V\left(t,x,z\right)=U\left(e^{r(T-t)}x+\psi(t,z)\right),
	\end{align*}
	where
\begin{align*}
&\psi(t,z)=\frac{1}{2}a(t)(z-r)^2+c(t),\\	
&a(t)=\frac{1}{\beta \sigma^2}\frac{\tau(t)(T-t)}{\tau(t)+(T-t)},\\
&c(t)=\frac{1}{2\beta}\int_t^T\frac{\tau'(s)(T-s)}{\tau(s)[\tau(s)+(T-s)]}ds,
\end{align*}
	and the optimal investment strategy is
\begin{align*}
		\pi^*=e^{-r(T-t)}\frac{1}{\beta}\frac{\tau(t)}{\tau(t)+T-t}\frac{Z_t-r}{\sigma^2}.
\end{align*}
	(2) If $U(x)=\frac{x^{1-\gamma}}{1-\gamma}$ (CRRA case) where $\gamma$ satisfies the condition: $\gamma>1$ or $0<\gamma<1$ with $\frac{t_0}{T}>\frac{1-\gamma}{\gamma}$, then we have
	\begin{align*}
		V(t,x,z)=U(e^{r(T-t)}x)\exp\{\gamma \psi(t,z)\},
	\end{align*}
	where
\begin{align*}
&\psi(t,z)=\frac{1}{2}a(t)(z-r)^2+c(t),\\
&a(t)=\frac{1}{\gamma \sigma^2}\frac{\tau(t)\frac{1-\gamma}{\gamma}(T-t)}{\tau(t)-\frac{1-\gamma}{\gamma}(T-t)},\\
&c(t)=\frac{1}{2\gamma}\int_t^T  \frac{\tau'(s)\frac{1-\gamma}{\gamma}(T-s)}{\tau(s)[\tau(s)-\frac{1-\gamma}{\gamma}(T-s)]}ds,
\end{align*}
and the optimal investment strategy is
\begin{align*}
		\pi^*=\frac{\tau(t)}{\gamma \tau(t)-(1-\gamma)(T-t)}\frac{Z_t-r}{\sigma^2}X_t.
\end{align*}
	(3) If $U(x)=\frac{x^{1-\gamma}}{1-\gamma}$ (CRRA case) and $\gamma$ satisfies the condition $\frac{t_0}{T}\leq \frac{1-\gamma}{\gamma}$, then the value function will be infinite, i.e., the problem become meaningless. The occurrence of meaningless is not bring about by the extra information but bayesian problem itself.\\
(4) If $U(x)=\ln(x)$, the case with $\gamma$ approaching $1$, then
\begin{align*}
	V(t,x,z)=U(e^{r(T-t)}x)+\psi(t,z),
\end{align*}
   	where
\begin{align*}
	&\psi(t,z)=\frac{1}{2}a(t)(z-r)^2+c(t),\\	
	&a(t)=\frac{1}{\sigma^2}(T-t),\\	
	&c(t)=\frac{1}{2} \int_t^T  \frac{\tau'(s)(T-s)}{\tau^2(s)}ds,
\end{align*}
	and the optimal investment strategy is
	\begin{align*}
		\pi^*=\frac{Z_t-r}{\sigma^2}X_t.
	\end{align*}	
\end{theorem}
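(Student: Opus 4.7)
The approach is to exploit the separability induced by each utility function, reduce the HJB equation (\ref{HJB}) with generator (\ref{newinfini}) to a system of ODEs in $t$, and then solve those ODEs in closed form. For CARA, I would take the ansatz $V(t,x,z)=U(e^{r(T-t)}x+\psi(t,z))$, which reflects the fact that wealth shifts produce only multiplicative rescalings of exponential utility. For CRRA, the multiplicative ansatz $V(t,x,z)=U(e^{r(T-t)}x)\exp\{\gamma\psi(t,z)\}$ respects the homogeneity of $x^{1-\gamma}/(1-\gamma)$. For the logarithmic case, the additive ansatz $V(t,x,z)=U(e^{r(T-t)}x)+\psi(t,z)$ plays the analogous role. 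In all three cases, $V(T,x,z)=U(x)$ forces $\psi(T,z)\equiv 0$.

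The next step is the first-order condition in $\pi$: since $V_{xx}<0$ for strictly concave $U$, the unconstrained maximum is attained at
\begin{align*}
\pi^*=-\frac{(z-r)V_x+\sigma^2 V_{xz}/\tau(t)}{\sigma^2 V_{xx}}.
\end{align*}
Under each ansatz, the ratios $V_x/V_{xx}$, $V_{xz}/V_{xx}$, $V_t/V$, $V_z/V$, $V_{zz}/V$ become elementary expressions in $\psi$, $\psi_z$, $\psi_{zz}$ and functions of $t$ alone. Plugging $\pi^*$ back into the HJB and dividing by the common $V$ (CARA, CRRA) or taking the log-derivative (log case) collapses the PDE to a scalar equation for $\psi(t,z)$ whose $z$-dependence is purely quadratic. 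This motivates the further ansatz $\psi(t,z)=\tfrac{1}{2}a(t)(z-r)^2+c(t)$ with $a(T)=c(T)=0$. Matching the $(z-r)^2$ coefficient yields a Riccati-type ODE for $a$, and matching the constant term gives $c'(t)$ explicitly in terms of $a(t)$.

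The Riccati ODE in each case is solved by a standard linearization. Writing the dimensionless unknown $b:=\sigma^2\beta a$ (CARA) or $b:=\sigma^2\gamma a$ (CRRA), the equation collapses to a linear first-order ODE in $1/b$ whose integrating factor is built from $(-1/\tau)'=q^2/\tau^2$; this is precisely what makes the informative clock $\tau(\cdot)$ in (\ref{tau}) the right time variable. Integrating subject to $a(T)=0$ and inverting recovers the closed forms
\begin{align*}
a(t)=\frac{1}{\beta\sigma^2}\cdot\frac{\tau(t)(T-t)}{\tau(t)+T-t}\quad\text{(CARA)},\qquad a(t)=\frac{1}{\gamma\sigma^2}\cdot\frac{\tau(t)\tfrac{1-\gamma}{\gamma}(T-t)}{\tau(t)-\tfrac{1-\gamma}{\gamma}(T-t)}\quad\text{(CRRA)}.
\end{align*}
Integrating $c'(t)$ against $\tau'(s)=q(s)^2$ then produces the stated expressions for $c(t)$, and substituting $\psi_z=a(t)(z-r)$ into the FOC returns the claimed optimal $\pi^*$.

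The hard part is twofold. First, in the CRRA case the denominator $\tau(t)-\tfrac{1-\gamma}{\gamma}(T-t)$ must stay strictly positive on $[0,T]$ for $a(t)$ (and hence $V$) to be finite; since $\tau$ is non-decreasing and $T-t$ decreasing, the binding constraint occurs at $t=0$, giving exactly the threshold $t_0/T>(1-\gamma)/\gamma$ in Part (2). Part (3) then requires showing that when this threshold is violated the value is genuinely $+\infty$; I would construct, even in the degenerate case $\rho\equiv 0$ (so $\tau\equiv t_0$), a sequence of admissible strategies scaled against the Gaussian posterior mean whose expected utilities diverge, confirming that the pathology is intrinsic to the Bayesian problem itself and unrelated to $m$. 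Second, a verification argument is needed: apply It\^o's formula to $V(t,X_t,Z_t)$ along an arbitrary admissible $\pi$, take expectations (uniform integrability follows from the Gaussian-quadratic form of $V$ together with $\int_0^T\pi_t^2\,dt<\infty$), and identify $\pi^*$ as optimal. The log case (Part (4)) then either follows as the $\gamma\to 1$ limit of the CRRA formulas — all entries extend continuously — or, equivalently, is derived directly from the additive ansatz, the two routes agreeing by inspection.
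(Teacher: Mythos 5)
Your proposal follows essentially the same route as the paper's Appendix D: the same three ans\"atze (shift inside $U$ for CARA, multiplicative factor for CRRA, additive for log), the same first-order condition in $\pi$, the same quadratic guess $\psi(t,z)=\tfrac12 a(t)(z-r)^2+c(t)$ leading to a Riccati equation for $a$ that is linearized by passing to its reciprocal, and for Part (3) the same idea of exhibiting a family of proportional strategies $\pi_t=kX_t$ whose expected utilities diverge as $k\to\infty$. The only differences are cosmetic: the paper carries an extra $b(t)(z-r)$ term in $\psi$ and then shows $b\equiv 0$, and it omits the verification (It\^o/uniform-integrability) step that you rightly note is needed to conclude optimality.
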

\begin{proof}
	See Appendix D.
\end{proof}
\vskip 5pt

Especially, if no extra information is involved, Problem (\ref {hjbvalue}) reduces to most viewed ambiguity case, which corresponds the situation the value function with $\tau(\cdot)=t_0+\cdot$; If no ambiguity is considered here, the value function will be given in the sense of limitation that $t_0= \infty$, $\tau(t)=\infty$ and $Z_t=\mu$. Then Problem (\ref {hjbvalue}) reduces to the classical problem.

Moreover, noticing the fact $\frac{\tau(t)}{\tau(t)+T-t}<1$, we can find with ambiguity, the investment strategy will be conservative than the classical case and less conservative if more information is involved in CARA case. Similarly, in CRRA case, it depends on the value of relative risk aversion $\gamma$. When $\gamma>1$, the investment strategy will be also conservative and less conservative if more information is involved. But when $\frac{T}{t_0+T}<\gamma<1$, the investment strategy will be aggressive and less aggressive if more information is involved. At last, when $\gamma=1$, the investment strategy remains same as the classical case.
\vskip 15pt

\subsection{Advantageous perspective: informative clock} Now we discuss the conditional variance  $var(t):=\frac{\sigma^2}{t_0+\int_0^t q(s)^2 ds}$  at time $t$, which is strictly decreasing over time, and
introduce the concept of informative clock as $\tau(t):=\frac{\sigma^2}{var(t)}=t_0+\int_0^t q(s)^2 ds$, an index measuring the volume of the information,
where $q(s):=\frac{1}{\sqrt{1-\rho(s)^2}}$ defined in Theorem \ref{thm31}.
\vskip 5pt

To comprehend this concept, we take an analogy. If we want to obtain a normal distribution's mean where the variance $\sigma^2$ is known. We get a sample as $X=(X_1,\cdots,X_N)$. Then the conditional distribution of $\mu$ given  $X$ is $\mathcal{N}(\bar{X},\frac{\sigma^2}{N})$, denoting this fact by $\mu | X \sim \mathcal{N}(\bar{X},\frac{\sigma^2}{N})  $. It can be seen that $N=\frac{\sigma^2}{var(\mu|X)}$, We make a similar comparison as the size of sample in this case and the informative clock in this article. It can be said that informative clock is actually an index measuring the volume of the information just like the size of the sample.
\vskip 5pt

The natural case is that $\rho(t)\equiv 0$ among where $m$ can be treated as noise. Then the estimate of $\mu$ over time is only estimated by the information of the price of risky asset itself. Correspondingly, $q(s)=1$ and $\tau(t)=t_0+t$. Informative clock keeps pace as the real time with a fixed distance as $t_0$. In fact, $t_0$ is the informative clock at real time $t=0$ which is actually the information volume hidden in the prior distribution. Imagining that there exists a duration of history, if the duration is of value $t_0$, the distribution of $\mu$ would be estimated with variance of $\sigma_0^2$ by information in this duration. Reversely thinking, that is why $t_0$ defined as $t_0:=\frac{\sigma^2}{\sigma_0^2}$ to match the duration of history. %Here we can see, with time passing by, informative clock becomes larger which reflected larger volume of information and means littler variance.
\vskip 5pt

Now the extra information intrudes into our system. It is natural to predict that it would bring about larger volume of information. This is obvious as $q(t)\geq 1$ and $\tau(t)=t_0+\int_0^t q(s)^2 ds$ where bigger correlation coefficient can bring higher increasing speed of the informative clock. The extra information serves as an acceleration of informative clock. It accelerates to reduce the variance of the estimation of $\mu$ than the real time, i.e.,  the extra information can eliminate estimation risk in fast pace. %If we noticed the mean of estimation of $\mu$, we can find $q^2(t)$ is just a weighting of time. The time is weighting and the information is also weighted at that instant.
\vskip 5pt

The concept of informative clock is a more useful perspective for analyzing the value of information rather than correlation coefficient. If there are two sources of information, where the one has the information advantage than the other, i.e., with respect to the informative clock, the one is bigger than the other. The investor can make better decision to obtain higher utility. Moreover, Proposition \ref{proposition35} is a more strong conclusion than Proposition \ref{Proposition32} as it covers more cases.
\vskip 10pt

\begin{proposition}\label{proposition35}
If there are two sources of extra information $m^1$ and $m^2$ with informative clock relation that $\tau^1(t) \geq \tau^2(t)$ for $ \forall t\geq 0$, then  $V_{m^1} \geq V_{m^2}$.
\end{proposition}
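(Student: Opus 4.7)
The plan is to reduce $V_m$ to an explicit scalar functional of the informative clock $\tau(\cdot)$ via Theorem \ref{valuefunction}, and then to show that this functional moves in the correct direction when $\tau(\cdot)$ increases pointwise. Evaluating the closed-form value function at the initial state $(0,x_0,\mu_0)$, the coefficient $a(0)$ depends only on $t_0$ (and the utility parameter), hence is identical for $m^1$ and $m^2$. Consequently the entire dependence of $V_m$ on the choice of information is carried by the scalar $c(0)$, and it suffices to establish monotonicity of $c(0)$ in $\tau(\cdot)$ with the sign dictated by the sign convention of $U$.

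The key computation is the same in every case: the partial-fraction identity $\frac{1}{\tau(\tau+a)}=\frac{1}{a}\bigl[\tfrac{1}{\tau}-\tfrac{1}{\tau+a}\bigr]$, followed by the substitution $u(s)=\tau(s)+a(s)$ with $u'(s)=\tau'(s)+a'(s)$, collapses the integral defining $c(0)$ (after the $\ln\tau(T)$ terms cancel) into a constant that depends only on $t_0$, $T$ and the utility parameter, plus a signed multiple of $\int_0^T\frac{ds}{\tau(s)+a(s)}$. Explicitly, in the CARA case one obtains
\[
c(0)=\frac{1}{2\beta}\Bigl[\ln\frac{t_0+T}{t_0}-\int_0^T\frac{ds}{\tau(s)+(T-s)}\Bigr],
\]
and in the logarithmic case a direct integration by parts yields $c(0)=\tfrac12\bigl[T/t_0-\int_0^T ds/\tau(s)\bigr]$. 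The CRRA version is analogous with $a(s)=-\frac{1-\gamma}{\gamma}(T-s)$; the standing condition $\frac{t_0}{T}>\frac{1-\gamma}{\gamma}$ in the subcase $0<\gamma<1$ guarantees positivity of the denominator throughout $[0,T]$, which legitimises the partial-fraction step.

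Since $\tau\mapsto\frac{1}{\tau+a(s)}$ is strictly decreasing in $\tau$ (whenever the denominator is positive), $\tau^1\geq\tau^2$ on $[0,T]$ immediately yields the desired ordering between $c^1(0)$ and $c^2(0)$, and thence between $V_{m^1}$ and $V_{m^2}$. The main obstacle is keeping track of signs in the CRRA case: the sign of the coefficient in front of the $\tau$-dependent integral flips between the two sub-cases ($\gamma>1$ versus $0<\gamma<1$), and so does the sign of the prefactor $U(e^{rT}x_0)$ in $V_m=U(e^{rT}x_0)\exp(\gamma\psi(0,\mu_0))$. One has to verify that these two sign flips cancel, so that in every sub-case $V_m$ ends up strictly increasing in $\tau(\cdot)$ pointwise. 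This verification is routine but essential; once in place, the proposition follows at once from $\tau^1\geq\tau^2$, and in particular it strengthens Proposition \ref{Proposition32} because two extra information sources can satisfy $\tau^1\geq\tau^2$ without satisfying $\rho^1\geq\rho^2$ pointwise.
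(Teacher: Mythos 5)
Your proposal is correct, and it follows the paper's overall skeleton — evaluate the closed forms of Theorem \ref{valuefunction} at $(0,x_0,\mu_0)$, observe that $a(0)$ depends only on $t_0$, $T$ and the risk-aversion parameter and is therefore the same for $m^1$ and $m^2$, reduce the comparison to $c(0)$, and apply the partial-fraction identity. Where you genuinely diverge is in the monotonicity step. The paper changes variables to $u=\tau(s)$ and writes $c^1(0)-c^2(0)$ as a sum of two integrals, one over $[\tau^2(T),\tau^1(T)]$ and one over the common range, whose nonnegativity rests on the inverse-function inequality $(\tau^1)^{-1}(u)\leq(\tau^2)^{-1}(u)$. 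You instead split off the exact logarithmic derivative $\frac{d}{ds}\ln(\tau(s)+a(s))$, so that the $\ln\tau(T)$ contributions cancel and $c(0)$ collapses to an information-independent constant minus a signed multiple of $\int_0^T \frac{ds}{\tau(s)+a(s)}$; for CARA this gives $c(0)=\frac{1}{2\beta}\bigl[\ln\frac{t_0+T}{t_0}-\int_0^T\frac{ds}{\tau(s)+T-s}\bigr]$, which I have checked, and which is consistent with the paper's own limiting computation $c(0)=\frac{1}{2\beta}\ln\frac{t_0+T}{t_0}$ for $\tau\equiv\infty$ in Section 4.2. Your route buys two things: the pointwise monotonicity of $V_m$ in $\tau$ becomes completely transparent (no inverse functions, no case split on the ranges of $\tau^1$ and $\tau^2$), and as a by-product you obtain an explicit formula for $Value(\tau)$ and its upper bound that the paper only extracts in the degenerate limit. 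The paper's route, on the other hand, yields the two-term decomposition (\ref{compareofc}) that the authors go on to interpret economically (more total information versus more remaining time at a given information level), an interpretation your closed form does not directly provide. Your sign bookkeeping in the CRRA case is the right concern and does work out: for $\gamma>1$ the coefficient $-\frac{1-\gamma}{2\gamma^2}$ of the $\tau$-dependent integral and the sign of $U(e^{rT}x_0)$ both flip relative to $0<\gamma<1$, so $V_m$ is increasing in $\tau$ pointwise in every sub-case, exactly as you state.
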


\begin{proof}
	See Appendix E.
\end{proof}

 From the procedure of the comparison of $c$, we must broaden our mind. $\tau^1(t)\geq \tau^2(t)$ implies that in any real time the first information has more informative clock. However, if we take the perspective with $(\tau^1)^{-1}(u) \leq (\tau^2)^{-1}(u)$, it tells that in any same informative clock the first information has more remaining time to invest. Then we  see that the informative clock contributes the value function with two ways: the whole larger informative clock just as the first term in (\ref{compareofc}) shows and the adequate time to prepare in any same informative clock just as the second term shows.
\vskip 5pt

Another useful effect of informative clock is that HJB equation can be reduced into a representation with only informative clock as (\ref{newinfini}). In the infinitesimal generator, the term $V_t+[rx+\pi(z-r)]V_x+\frac{1}{2}\sigma^2 \pi^2 V_{xx}$ is the most classical HJB equation in the investment problem where $z$ is the certain rate of drift of the risky asset. Then $\sigma^2 \pi \frac{1}{\tau(t)}V_{xz}$ is comprehended as the adjustment of investment by the ambiguity of drift. The last term $\frac{1}{2}\sigma^2(-\frac{1}{\tau})'(t) V_{zz}$ depicts the evolution of the drift over time. Then HJB equation is intuitive under this perspective.
\vskip 5pt

Due to the close connection between $\rho$ and $\tau$ :  $\tau(t):=\frac{\sigma^2}{var(t)}=t_0+\int_0^t q(s)^2 ds$ and $q(s):=\frac{1}{\sqrt{1-\rho(s)^2}}$. For the convenience and vividness of expression, we use the functional $\tau$ to refer the information which has this volume of information. With it, all the value can be represented in concise form. And we assume the continuum of the $\rho(t)\in [0,1)$, i.e., any value in this interval can be obtained from the market. Thus we can assume any $\tau$ as the continuous function with derivative larger than one is obtained in the market. Finally, Proposition \ref{proposition35} has told us that the utility is irrelevant to the actual form of the source of information but what informative clock it can bring about for estimating $\mu$. Thus we can ignore the form of the extra information and only focus on the problem via the perspective of functional $\tau$.  \\
\vskip 10pt

\subsection{Value and cost of extra information} Now we solve the Problem (\ref {solution2.3}) and are equipped with the perspective of informative clock. To objectively measure the value of the extra information by a criterion, we introduce the concept of certainty equivalence of utility. As we have posed, we  define the value function $V_\tau(t,x,z)$ as the value function in the case that extra information can bring about the informative clock as the functional $\tau$. In the case without any extra information, the functional has the structure $\tau(\cdot)=t_0+\cdot$. Here we define it as the natural case and we denote that $\bold{0}$ as the functional in this case. It is the value function that we achieve by price itself. Then we define the value functional $Value(\cdot)$  by
\begin{align}\label{Val}
V_\tau(0,x_0,\mu_0)=V_{\bold{0}}\left(0,x_0+Value(\tau),\mu_0\right).
\end{align}
It means that taking initial value $x_0$ along with extra information with informative clock $\tau$, we can achieve the same utility as the case of dealing with initial value $x_0+Value(\tau)$ along with no extra information.  Here $V_{\bold{0}}$ is comprehended as our objective criterion.
\vskip 5pt

In the case of CARA utility function, comparing two value functions with the relationship $V_\tau(0,x_0,\mu_0) =V_{\bold{0}}\left(0,x_0+Value(\tau),\mu_0\right)$, we  obtain
\begin{align}
Value(\tau)=\int_0^T \frac{1}{2\beta} \frac{(T-t)\tau'(t)}{\tau(t)\left[\tau(t)+T-t\right]}dt-C_1,
\end{align}
where $C_1:=\int_0^T \frac{1}{2\beta} \frac{(T-t)}{(t+t_0)(T+t_0)}dt$ is a constant. This is actually the difference of $c(0)$ in two value functions with corresponding informative clock and it is more than zero just as Proposition \ref{proposition35} shows(see(\ref{compareofc})).
 We  see that it has nothing to do with the endowment $x_0$. This is derived from the property of CARA utility. People are caring about the absolute wealth in the utility function. Therefore the extra information will bring about the certain value.
 \vskip 5pt
In the case of CRRA utility function, comparing two value function with the relation  $V_\tau(0,x_0,\mu_0)=V_{\bold{0}}(0,x_0+Value(\tau),\mu_0)$, we  obtain
\begin{align}\label{CRRAVal}
& \ \ \ \ Value(\tau)\\ \nonumber
&=\left[\exp\left\{\int_0^T  \frac{1}{2(1-\gamma)} \frac{\tau'(t)\frac{1-\gamma}{\gamma}(T-t)}{\tau(t)[\tau(t)-\frac{1-\gamma}{\gamma}(T-t)]}dt\right\}/C_2-1\right]x_0.
\end{align}
where $C_2:=\exp\left\{\int_0^T  \frac{1}{2(1-\gamma)} \frac{\frac{1-\gamma}{\gamma}(T-t)}{(t+t_0)[(t+t_0)-\frac{1-\gamma}{\gamma}(T-t)]}dt\right\}$
 is a constant (The case $\gamma=1$ still holds in the sense of limit). Distinct from the CARA utility, people are caring about the multiple of the wealth with CRRA utility. The investors are surging to doubling their initial value. And we know from Eq.(\ref{CRRAVal}) and Proof of Proposition \ref{proposition35} that  the extra information will bring about the multiple increase of the endowment and the multiple number is the function of quotient of $c(0)$ with two informative clocks which coincides our experience.
  \vskip 5pt

No matter in CARA utility or CRRA utility, we find that the value of the extra information is independent with the market condition. It has nothing to do with the interest rate, risky asset's volatility and even the estimation of return. It only concerns about the informative clock which represents precision of the estimation and investor's preference for risk.

\begin{figure}
\begin{center}
\includegraphics[scale=0.6]{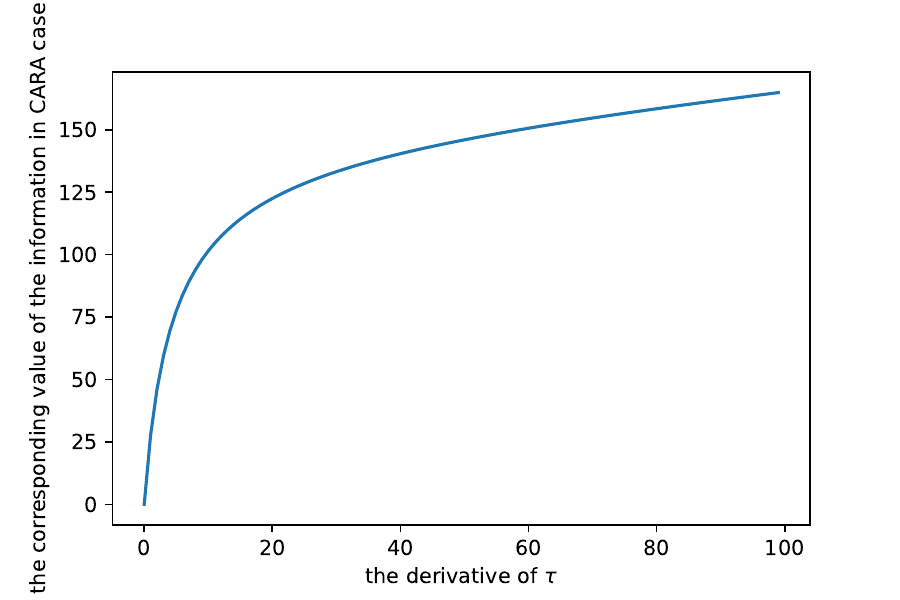}
\end{center}
\caption{The corresponding value of the information with respect to informative clock's derivative in CARA case.}
\end{figure}
\begin{figure}
\begin{center}
\includegraphics[scale=0.6]{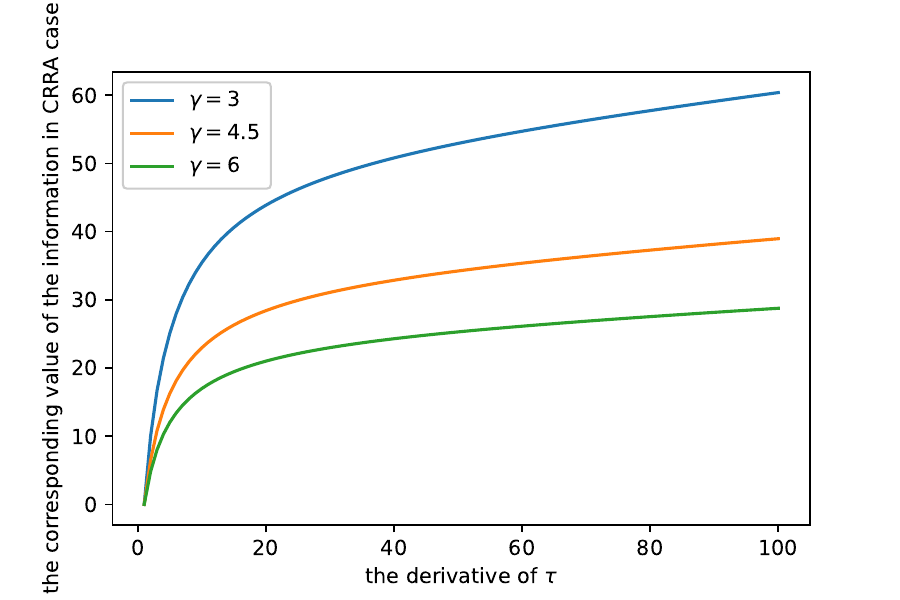}
\end{center}
\caption{The corresponding value of the information with respect to informative clock's derivative in CRRA case.}
\end{figure}
Here are two pictures of the value of information in CARA and CRRA case where the parameter is given with $t_0=4$, $T=2$, $\beta=0.001$, $x_0=1000$ and most importantly, $\tau(t)=t_0+k t$ where $k$ is a constant represent the derivative of the informative clock. For convenience, here we only use the linear informative clock for representation. It is obvious the information will bring huge value if the investor is less risk aversion both in CARA case and CRRA case where the value is inversely proportional to the absolute risk aversion in CARA case.
 \vskip 5pt

The other thing need to regard is that the marginal of volume of the information contributes less value when the information is plenty. Moreover, it has its bound which will be given in  Subsection \ref{4.6}.
 \vskip 5pt

Conversely, the gleaning of huge volume of information is costly. There must be a restriction of behavior that collecting information as much as the investor is not allowed. We define the punishment functional $Cost(\cdot)$ as follows:
\begin{align}\label{COS}
Cost(\tau)=\int_0^T cost\left(\tau'(t)\right)dt,
\end{align}
where the function $cost: [1,\infty)\to \mathrm{R}^+$ is an increasing and convex function. As we know, $\tau'(t)=\frac{1}{\sqrt{1-\rho(t)^2}} \geq 1$ and when $\tau'(t)=1$, it means $\rho(t)=0$ and no useful extra information is involved. Thus, we  set the boundary condition: $cost(1)=0$. The monotone property  and convexity originate from the fact that the effort to glean the marginal information ($\tau'(t)$) is positive and increases w.r.t.the volume of information.

\subsection{The most worthwhile information to acquire}

The value and cost of information are well defined, see (\ref{Val}) and (\ref{COS}). In this subsection the most important thing  we consider is to balance the value and cost to determine the most worthwhile information to acquire, i.e., solving the following functional maximization problem:
\begin{align}\label{FMP}
&\sup_{\tau} Net(\tau),\\
&Net(\tau):= Value(\tau)-Cost(\tau),\nonumber
\end{align}
where the informative clock $\tau$ is differentiable  and  its derivative is larger than one everywhere.
 \vskip 5pt
Now we assume  %the punishment function is determined by
$cost(x)=\lambda (x-1)^2$ for an instance, where $\lambda>0$ is the  relative ability of acquiring information,
 and solve Problem (\ref{FMP}) as follows.
 \vskip 5pt
\begin{theorem}\label{MTHM}
 In the case of CARA utility function, the optimal informative clock satisfies the condition
\begin{align}\label{carael}
\tau''(t)+\frac{1}{4\beta \lambda} \frac{1}{[\tau(t)+T-t]^2}=0.
\end{align}
In the case of CRRA utility function, the optimal informative clock satisfies the condition
\begin{align}\label{crrael}
\tau''(t)+\frac{y}{4\gamma \lambda} \frac{1}{[\tau(t)-\frac{1-\gamma}{\gamma}(T-t)]^2}=0,
\end{align}
where $y$ is a positive number.
\end{theorem}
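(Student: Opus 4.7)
The plan is to treat (\ref{FMP}) as a one-dimensional calculus-of-variations problem on $[0,T]$ with $\tau(0)=t_0$ fixed and $\tau(T)$ free, and to derive the first-order (Euler--Lagrange) necessary condition in each of the two utility regimes. Since $cost(x)=\lambda(x-1)^2$, one has $\partial_{\tau'}Cost=2\lambda(\tau'-1)$ and $\partial_{\tau}Cost\equiv 0$, so the cost side of the EL equation contributes exactly $2\lambda\tau''(t)$ after the $d/dt$ is taken.

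For the CARA case the functional $Net(\tau)$ is already a plain integral $\int_0^T L(t,\tau,\tau')\,dt$ (up to the constant $C_1$), with Lagrangian $L=\frac{1}{2\beta}\frac{(T-t)\tau'}{\tau[\tau+(T-t)]}-\lambda(\tau'-1)^2$. I would compute $\partial_{\tau'}L$ and $\partial_{\tau}L$ directly, then carry out $d/dt$ of $\partial_{\tau'}L$ using the quotient rule, carefully keeping track of the derivative $\frac{d}{dt}[\tau+(T-t)]=\tau'-1$. The key identity I expect to emerge is that the two terms containing $(T-t)\tau'(2\tau+T-t)$ arising from $\frac{d}{dt}\partial_{\tau'}L$ and from $\partial_{\tau}L$ cancel exactly, leaving only $-\tau^2/\{2\beta[\tau(\tau+T-t)]^2\}=-1/\{2\beta(\tau+T-t)^2\}$. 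Combining with the $2\lambda\tau''$ contribution from $Cost$ and setting the sum to zero yields (\ref{carael}).

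For the CRRA case the functional is $Net(\tau)=\frac{x_0}{C_2}e^{J(\tau)}-x_0-\int_0^T\lambda(\tau'-1)^2\,dt$, where $J(\tau)=\int_0^T L_J\,dt$ with $L_J=\frac{(T-t)\tau'}{2\gamma\,\tau[\tau-\frac{1-\gamma}{\gamma}(T-t)]}$. The extra twist is the outer exponential, which I would handle by the chain rule: for any admissible perturbation $\eta$, $\frac{d}{d\epsilon}Net(\tau+\epsilon\eta)|_{\epsilon=0}=\frac{x_0}{C_2}e^{J(\tau)}\cdot\delta J(\tau)[\eta]-\delta Cost(\tau)[\eta]$. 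At an optimizer $\tau^{\ast}$ the prefactor $y:=\frac{x_0}{C_2}e^{J(\tau^{\ast})}>0$ is a fixed positive number, so the stationarity condition is $y\,[\partial_{\tau}L_J-\frac{d}{dt}\partial_{\tau'}L_J]+2\lambda\tau''=0$. Performing the same cancellation as in the CARA case, now with $(T-t)$ replaced by $-\frac{1-\gamma}{\gamma}(T-t)$ in the bracket, gives $\partial_{\tau}L_J-\frac{d}{dt}\partial_{\tau'}L_J=\frac{1}{2\gamma[\tau-\frac{1-\gamma}{\gamma}(T-t)]^2}$, and (\ref{crrael}) follows.

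The main obstacle is the algebraic manipulation that produces the clean cancellation; handled carelessly it leaves a messy rational function of $\tau,\tau',T-t$ rather than the tidy $1/(\tau+T-t)^2$ (resp.\ $1/[\tau-\frac{1-\gamma}{\gamma}(T-t)]^2$) form stated in the theorem. A secondary subtlety is the CRRA transversality: because $\tau(T)$ is free, one picks up the condition $\partial_{\tau'}L|_{t=T}=0$, which does not enter the interior ODE but is needed to justify that the boundary contributions from integration by parts vanish; this is worth noting but does not affect the statement of (\ref{carael})--(\ref{crrael}).
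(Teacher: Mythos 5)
Your proposal is correct, and your computations check out: for CARA the Euler--Lagrange reduction with the partial-fraction form $\frac{T-t}{\tau(\tau+T-t)}=\frac{1}{\tau}-\frac{1}{\tau+T-t}$ does produce exactly the cancellation you anticipate, leaving $-\frac{1}{2\beta(\tau+T-t)^2}-2\lambda\tau''=0$, which is (\ref{carael}); this part coincides with the paper's argument. Where you genuinely diverge is the CRRA case. The paper observes that the outer exponential makes the functional nonlinear in $\int L_J\,dt$, so it linearizes by Fenchel duality, writing $\frac{x_0}{C_2}e^{x}=\sup_{y>0}\{y-y\ln(\tfrac{C_2y}{x_0})+xy\}$, swaps $\sup_\tau$ and $\sup_y$, and applies the standard Euler--Lagrange equation to $\mathfrak{L}^y=yL_J-\lambda(\tau'-1)^2$ for each fixed $y$. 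You instead take the Gateaux derivative of $Net$ directly and use the chain rule, so the multiplier appears as $y=\frac{x_0}{C_2}e^{J(\tau^\ast)}$ evaluated at the optimizer. The two are equivalent in substance --- the paper's optimal dual variable is precisely $y^\ast=\frac{x_0}{C_2}e^{x}$, i.e.\ your prefactor --- but your route is more elementary and makes transparent why $y$ is a positive constant attached to the (unknown) optimizer, while the paper's route has the advantage of reducing everything to a textbook Euler--Lagrange problem and of folding the determination of $y$ into an outer scalar optimization. Your remark on the transversality condition $\partial_{\tau'}L|_{t=T}=0$ (which here forces $\tau'(T)=1$ since the coefficient of $\tau'$ in the value integrand vanishes at $t=T$) is a worthwhile addition that the paper omits; the only point neither you nor the paper addresses is the inequality constraint $\tau'\geq 1$, but since the theorem only asserts a necessary first-order condition this does not affect the statement.
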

\vskip 5pt

\begin{proof}
	See Appendix F.
\end{proof}

\vskip 5pt
\begin{Remark}
In fact, using  the  same method as in Theorem \ref{MTHM}  we can solve Problem (\ref{FMP}) for  other punishment function. Theorem \ref{MTHM} just provides a template to find extreme point of Problem (\ref{FMP}). And this template is the most widespread for describing the punishment function.
\end{Remark}
\vskip 5pt

The last work to find the optimal informative clock is to search all the functional satisfying the necessary condition with the different initial condition $\tau'(0)$ and $y$. This part is trivial and we omit it here.
%The ODE of $\tau$ is not determined as the terminal condition that $\tau(T)$ is not fixed. {\color{red}{\{ Due to the loss of one degree of freedom, for convenience, we put the restriction from $\tau(T)$ to the $\tau'(0)$. Once the condition of $\tau'(0)$ is given, the whole $\tau$ will be fixed. Then we just select the one by searching which is the saddle point of $Net(\cdot)$.\}}}
%That is the necessary condition that the optimal functional must satisfy. Similarly, we search the optimal $\tau^y$ for all curve satisfying this ODE by different initial condition $\tau^{y'}(0)$.
%Finally, we search the supreme with $y>0$ and we can find the optimal functional $\tau$ from $\tau^y$.
\vskip 5pt
 Based on Eq.(\ref{carael}) and Eq.(\ref{crrael}), we have
  $$\tau''(t)+g(t)=0,$$
  where $g(t)>0$ (it holds for any punishment functional).  It follows that
\begin{align*}
	\tau''(t)<0.
\end{align*}

\begin{figure}
\begin{center}
\includegraphics[scale=0.6]{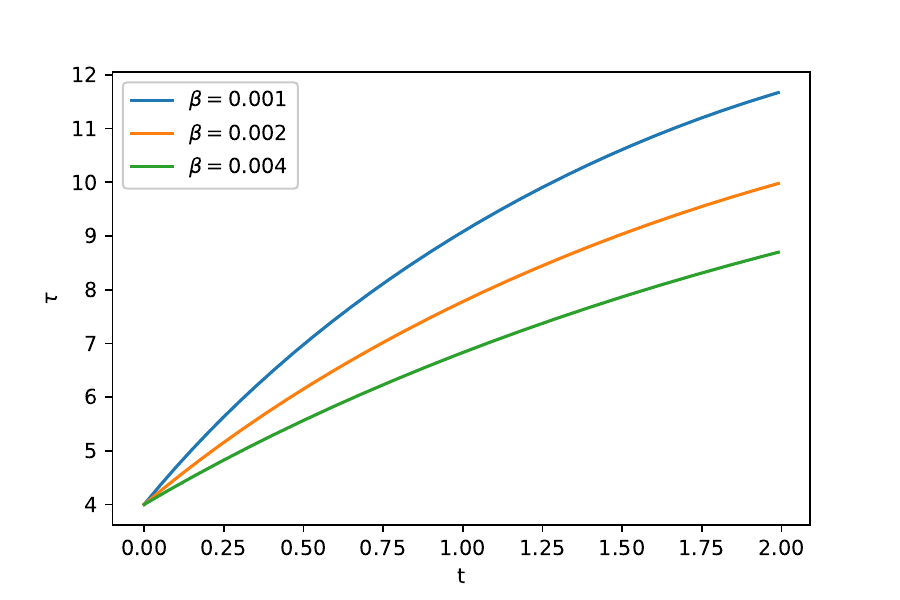}
\end{center}
\caption{The optimal informative clock in CARA case.}
\end{figure}
\begin{figure}
\begin{center}
\includegraphics[scale=0.6]{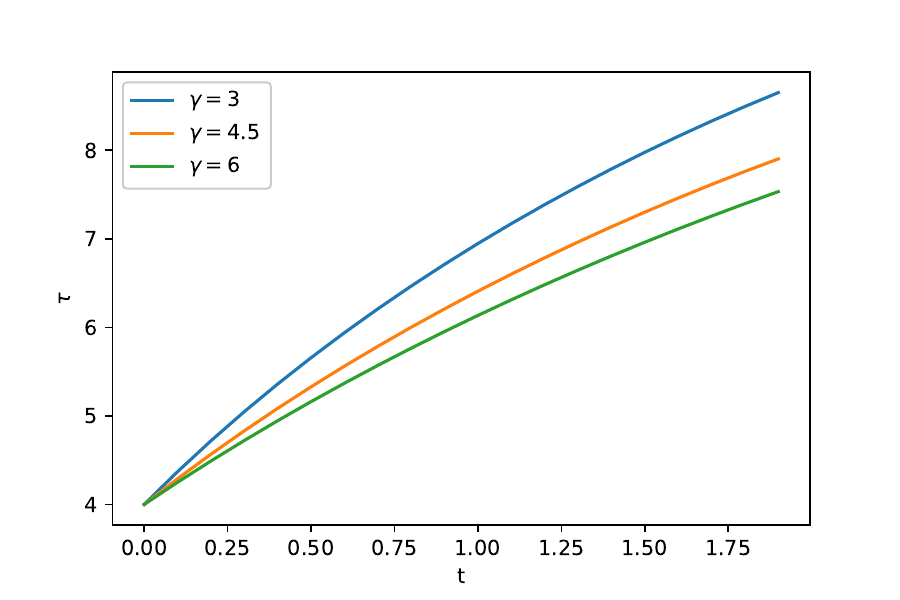}
\end{center}
\caption{The optimal informative clock in CRRA case.}
\end{figure}

That is to say the investor will devote less effort to glean the information as the time goes. In fact, this conclusion can also be derived from Proposition \ref{proposition35}. If there is a information with $\tau'(t_1)<\tau'(t_2)$ where $t_1<t_2$ having the most net value,  we reshuffle the order of $\tau'$ as $\hat{\tau}'$ and define a new $\hat{\tau}(t)=\int_0^t \hat{\tau}'(s)ds$ to achieve $\hat{\tau}(t) \geq \tau(t)$ for all $t$. Then $V^{\hat{\tau}}> V^{\tau}$ and $Value(\hat{\tau})> Value(\tau)$. However, their cost is the same. Then we find a new informative clock with more net value and there is a contradiction. Thus we must glean the information as soon as possible.
\vskip 5pt
Here are two pictures of the optimal informative clock in CARA and CRRA case where the parameter is given the same as $t_0 = 4$, $T = 2$, $x_0 = 1000$, $\lambda = 1$. As we can see, the optimal informative is really increasing and concave. Furthermore, if we compare the optimal informative clock for the investor with different risk aversion in CARA case or CRRA case, we will see investor with lower risk aversion are more inclined to glean more information. In other words, gleaning information benefits more to the investor with lower risk aversion.

\section{Complements of information itself}
\subsection{Sources of information}

We provide a discussion around the extra information itself. In the financial market, a wide range of information can be gathered and utilized. One commonly considered type of information is systemic risk, which we assume is generated by the process $W^{\text{systemic}}$. For instance, we can consider $m$ to be equal to $W^{\text{systemic}}$. However, it is important to note that any observed risk can be represented by the process $m$. Whenever a risky asset exists in the market, various risk factors contribute to the price fluctuations and can be observed. Furthermore, the phenomenon of co-movement in asset prices offers us a perspective to identify the process $m$.

One potential concern might arise regarding the calculation of $\rho(t)$ as $W$ itself is not directly observable. However, since we have the observation $Y$ (defined in (\ref{Y})), and $\rho(t)=\frac{d\langle m, Y\rangle_t}{\sqrt{d\langle m \rangle_t d\langle Y \rangle_t}}$, we can effectively calculate their correlation coefficient without direct observation of $W$. This equation provides a feasible way to compute the correlation coefficient, which empowers us to test different sources of information. Therefore, we have the ability to assess the correlation between $Y$ and different sources of information, allowing us to identify the ones with the highest correlation. Furthermore, we can even combine multiple sources of information through signal engineering techniques to achieve a higher correlation with $Y$.
\vskip 15pt

\subsection{the explosion of informative clock} \label{4.6} Assumption \ref{assumption} (see (\ref{assumption})) can be rewritten as $\tau(T)<\infty$. It is the case that the informative clock is finite from $0$ to $T$. It arouses our curiosity that what would happen when the informative clock become infinite. In fact, at that time, the ambiguity hidden in $\mu$ is totally erased. As the conditional variance of $\mu$ becomes zero and the distribution of $\mu$ has collapsed into a single point's distribution. While this phenomenon has been supposed as  there is a `insider' investor, who can observe both the drift and driving Brownian motion $W$ (\cite{zhao1999bayesian}) . The informative clock becomes infinite means that the investor becomes `insider' investor knowing everything.
\vskip 5pt

It is natural to put forward some problems around that the informative clock evolve from finite time to the infinite. The main one is that how it happens. In fact, if we extend the range of $\rho$ with $\rho(t)\in [0,1]$. It will happen when the fact $\rho(t)$ approaches or equals to one (for example, $m=W$) with enough time which contributes infinite value for the informative clock. Another one may be around the function of informative clock becomes singular as it tends infinite in finite time. There is no worry as in the HJB equation, we can observe that $\tau$ only exists with the form $\frac{1}{\tau}$ which can be continuous from $\frac{1}{\tau(0)}$ to $0$ where two ends are both finite. Thus, it really bothers nothing when there is a surge of infinite information if we assure the continuity of $\frac{1}{\tau}$. Moreover, all the deduction in  Section 3 still hold if we use limitation as there is always informative clock exist in the dominator to assure anything well-posedness. Above all, everything remains alright if we just take $(\frac{1}{\tau})(t)=0$ when $\tau(t)=0$.

 \vskip 5pt
Ultimately, we take $\tau(0^+)=\infty$ for consideration (can be comprehended as the limit $\tau(t)=t_0+nt$ where $n\to \infty$), %and $\upsilon=0$,
that is, the investor is assumed as `insider' investor at the beginning (time is $0^+$). This corresponds the value of $V(0^+,x_0,Z_{0^+})$ which is the classical case with $\mu=Z_{0^+}$. And $\mu$ has the prior distribution $\mathcal{N}(\mu_0,\sigma^2_0)$. Then we have that $\int_{\mathrm{R}} V(0^+,x_0,\mu_0+u) \mathcal{N}_{\sigma_0^2}(u)du$ is actually the objective value what we want. In fact, this value can be gotten directly in the value function if we allow the generalized function calculation in $c(0)$ in the HJB equation. Taking CARA for example,
\begin{align*}
	c(0)&=\frac{1}{2\beta}\int_0^T\frac{\tau'(s)(T-s)}{\tau(s)[\tau(s)+(T-s)]}ds\\
	&=\frac{1}{2\beta}\int_0^{0^+}\left[\frac{\tau'(s)}{\tau(s)}-\frac{[\tau(s)+T]'}{\tau(s)+T}\right]ds\\
	&=\frac{1}{2\beta}\left[\ln(\tau(s))-\ln(\tau(s)+T)\right]|_0^{0^+}\\
	&=\frac{1}{2\beta}\ln(\frac{t_0+T}{t_0}).
\end{align*}
Similarly, we  obtain the value
\begin{align*}
	c(0)=\frac{1}{2\gamma}\ln(\frac{t_0}{t_0-\frac{1-\gamma}{\gamma}T})
\end{align*}
in CRRA case if $\gamma\neq 1$. And
\begin{align*}
	c(0)=\frac{1}{2}\frac{T}{t_0}
\end{align*}
in CRRA case when $\gamma=1$.
\vskip 5pt
 As regards to Proposition \ref{proposition35} and content of information's value, it is obvious that the value of any extra information has its bound. And the bound is finite and can be approached if we pay enough cost.
\\
\vskip 10pt
\section{\bf Conclusions}
In this paper, we show that,  with the extra information, the estimation of the risky asset's is preciser and the utility can be improved in addition. We first put forward the new concept of ``informative clock'' to describe the precision of the estimation and use it to represent everything in the concise form. The value and cost are well defined for any extra information. From the process of trading off, it tells us to glean the extra information earlier will be great and we find investors with less risk aversion are more inclined to glean information. Finally, it bothers us nothing to permit the informative clock infinite and find the bound of the extra information's value.
\vskip 10pt

\appendix
\vskip 10pt

\section*{Appendix A.  Proof of Proposition \ref{Proposition31}}
\begin{proof}
Without loss of generality, $m^1$ and $m^2$ are regarded as Brownian motion (see in the Remark).  As two processes $m^1$ and $m^2$ sharing the same correlation coefficient with $Y$, by using Ito formula, the characteristic functions of the two pairs $(Y, m^1)$ and $(Y, m^2)$  are equal, as such, they have the same probability law. In the case containing $m^1$, as the optimal $\pi^1$ is adapted to the filtration $\mathcal{F}_t^S\vee \mathcal{F}_t^{m^1}$,  there must exist the functional $f$ s.t. $\pi^1_t=f(Y_u,m^1_u,0\leq u\leq t)$. We choose $\pi^2_t=f(Y_u,m^2_u,0\leq u\leq t)$ adapted to the filtration $\mathcal{F}_t^S\vee \mathcal{F}_t^{m^2}$. Then the process classes $(Y, m^1, \pi^1, X^1)$ and $(Y, m^2, \pi^2, X^2)$ have the same probability law. As such,  $X_T^1$ and $X_T^2$ have also the same probability law, and their expected utilities are equal. As $\pi^1$ is the optimal one and $\pi^2$ is an selected one,   $V_{m^1} \geq V_{m^2}$. Similarly, we have $V_{m^2} \leq V_{m^1}$. Thus, $V_{m^1}= V_{m^2}$.
\end{proof}

\section*{Appendix B.  Proof of Proposition \ref{Proposition32}}
Before proving the second conjecture, we prepare a lemma  as follows:
\vskip 5pt
\begin{lemma}\label{Lemma4.2}
Let $(\Omega,\mathcal{F},\mathcal{P})$ be the probability space with $\mathcal{G}_1$ and $\mathcal{G}_2$ being the sub-$\sigma$-algebras of $\mathcal{F}$. For any random variable $X\in L^1( \Omega,\mathcal{F},\mathcal{P}    )$, if $X$ and $\mathcal{G}_1$ are independent of $\mathcal{G}_2$, then we have
\begin{align}\label{expectationreduce}
E[X|\mathcal{G}_1\vee \mathcal{G}_2]=E[X|\mathcal{G}_1].
\end{align}
\end{lemma}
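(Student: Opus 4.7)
The plan is to verify the two defining properties of conditional expectation for the proposed identity. First, since $E[X \mid \mathcal{G}_1]$ is $\mathcal{G}_1$-measurable, it is automatically $\mathcal{G}_1 \vee \mathcal{G}_2$-measurable, so the measurability requirement is free. The substantive task is to check that
$$ \int_A E[X \mid \mathcal{G}_1]\, d\mathcal{P} = \int_A X\, d\mathcal{P} \quad \text{for every } A \in \mathcal{G}_1 \vee \mathcal{G}_2. $$

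To do this I would use a $\pi$-$\lambda$ (monotone class) argument. Let $\mathcal{C} := \{A_1 \cap A_2 : A_1 \in \mathcal{G}_1,\ A_2 \in \mathcal{G}_2\}$; this is a $\pi$-system generating $\mathcal{G}_1 \vee \mathcal{G}_2$. For $A = A_1 \cap A_2 \in \mathcal{C}$, the hypothesis that $X$ and $\mathcal{G}_1$ are jointly independent of $\mathcal{G}_2$ means that $\sigma(X) \vee \mathcal{G}_1$ is independent of $\mathcal{G}_2$. Since $X \mathbf{1}_{A_1}$ is measurable with respect to $\sigma(X) \vee \mathcal{G}_1$ and $\mathbf{1}_{A_2}$ is $\mathcal{G}_2$-measurable, factorization gives
$$ E[X \mathbf{1}_{A_1} \mathbf{1}_{A_2}] = E[X \mathbf{1}_{A_1}]\, \mathcal{P}(A_2). $$
On the other side, $E[X \mid \mathcal{G}_1]\mathbf{1}_{A_1}$ is $\mathcal{G}_1$-measurable, hence also independent of $\mathbf{1}_{A_2}$, and combining this factorization with the defining property of $E[X\mid\mathcal{G}_1]$ (applied to $A_1 \in \mathcal{G}_1$) yields
$$ E\!\left[E[X \mid \mathcal{G}_1]\, \mathbf{1}_{A_1} \mathbf{1}_{A_2}\right] = E\!\left[E[X \mid \mathcal{G}_1]\, \mathbf{1}_{A_1}\right] \mathcal{P}(A_2) = E[X \mathbf{1}_{A_1}]\, \mathcal{P}(A_2). $$
The two displayed expressions coincide, so the required identity holds on $\mathcal{C}$.

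To upgrade from $\mathcal{C}$ to all of $\mathcal{G}_1 \vee \mathcal{G}_2$, I would consider the class $\mathcal{D}$ of sets $A \in \mathcal{G}_1 \vee \mathcal{G}_2$ for which the identity holds and verify that $\mathcal{D}$ is a $\lambda$-system (it contains $\Omega$, is closed under proper differences by linearity, and closed under countable increasing unions by monotone convergence, using $X \in L^1$ to split into positive and negative parts). Dynkin's $\pi$-$\lambda$ theorem then gives $\mathcal{D} = \sigma(\mathcal{C}) = \mathcal{G}_1 \vee \mathcal{G}_2$, completing the proof. There is no real obstacle here; the only point requiring care is to invoke the independence in the correct form ($\sigma(X) \vee \mathcal{G}_1 \perp \mathcal{G}_2$ rather than merely $\sigma(X) \perp \mathcal{G}_2$ and $\mathcal{G}_1 \perp \mathcal{G}_2$ separately), since the weaker pairwise statement would not suffice for the factorization step.
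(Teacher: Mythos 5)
Your proof is correct and follows essentially the same route as the paper's: both reduce the identity to the $\pi$-system of intersections $A_1\cap A_2$ with $A_1\in\mathcal{G}_1$, $A_2\in\mathcal{G}_2$, and verify it there via the factorizations $E[X\mathbf{1}_{A_1}\mathbf{1}_{A_2}]=E[X\mathbf{1}_{A_1}]\mathcal{P}(A_2)$ and $E[E[X\mid\mathcal{G}_1]\mathbf{1}_{A_1}\mathbf{1}_{A_2}]=E[X\mathbf{1}_{A_1}]\mathcal{P}(A_2)$ granted by the joint independence of $\sigma(X)\vee\mathcal{G}_1$ and $\mathcal{G}_2$. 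Your explicit remark that pairwise independence would not suffice, and your spelled-out $\lambda$-system closure step, are welcome additions of rigor but do not change the argument.
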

\vskip 5pt
\begin{proof}
As both sides of (\ref{expectationreduce})are $\mathcal{G}_1\vee \mathcal{G}_2$-measurable, and the nonempty collection $\mathfrak{F}_0\triangleq\{ A_1\cap A_2 | A_1\in \mathcal{G}_1,A_2\in \mathcal{G}_2\}$ of subsets of $\Omega$ is a $\pi$-system generating $\mathcal{G}_1\vee \mathcal{G}_2$, Eq.(\ref{expectationreduce}) is equivalent to
\begin{align*}
\int_{A_1\cap A_2}  E[X|\mathcal{G}_1\vee \mathcal{G}_2]dP=\int_{A_1\cap A_2} E[X|\mathcal{G}_1]dP
\end{align*}
for $\forall$ $A_1\cap A_2\in \mathfrak{F}_0 $ . As $E[X1_{A_1}|\mathcal{G}_1]$ and $1_{ A_2}$ are independent, and $X1_{A_1}$ and $1_{ A_2}$ are independent, we have
\begin{align*}
& \ \  \ RHS\\
&=\int_{\Omega} 1_{\{A_1\cap A_2\}} E[X|\mathcal{G}_1]dP=\int_{\Omega}1_{A_1}1_{ A_2} E[X|\mathcal{G}_1]dP\\
&=\int_{\Omega} E[X1_{A_1}|\mathcal{G}_1] 1_{ A_2}dP=E[ E[X1_{A_1}|\mathcal{G}_1]] E[1_{ A_2}]\\
&=E[ X1_{A_1}] E[1_{ A_2}]=\int_{\Omega} X1_{A_1}1_{ A_2} dP\\
&=\int_{\Omega}1_{A_1\cap A_2}  E[X|\mathcal{G}_1\vee \mathcal{G}_2]dP=\int_{A_1\cap A_2}  E[X|\mathcal{G}_1\vee \mathcal{G}_2]dP.
\end{align*}
\end{proof}
\vskip 5pt
\begin{proof}
 We can construct artificially a pure noise Brownian motion $W^{noise}$ which is assumed to be on the probability space $(\Omega,\mathbb{F},\mathbb{P})$ and adapted to the filtration $\mathbb{F}$. If not, there is an extension $(\widetilde{\Omega},\widetilde{\mathbb{F}},\widetilde{\mathbb{P}})$ of $(\Omega,\mathbb{F},\mathbb{P})$ which can be defined by the cartesian product of the original one and the probability space of $W^{noise}$. Additionally, the extension of space makes no difference of investing and objective function as the restriction of $\pi$. Thus the probability space can be arbitrarily fertile for our need. Similarly, $m^1$ and $m^2$ are still regarded as Brownian motions.

 Next we claim that the observed noise contributes nothing to our strategy for any given extra information, that is,
\begin{align}\label{utilitynoise}
\sup\limits_{\pi \in \mathcal{A}_{m}} EU(X_T)=\sup\limits_{\pi \in \mathcal{A}_{m,W^{noise}}} EU(X_T),
\end{align}
where $\mathcal{A}_{m,W^{noise}}:=\{ \pi | \int_0^T \pi_t^2 dt<\infty, \pi \text{ adapted to } \mathbb{F}^S \vee \mathbb{F}^m\vee \mathbb{F}^{W^{noise}}\}$.

Let us review the estimation of $\mu$. In fact,  if we let $X=\mu$, $\mathcal{G}_1= \mathcal{F}_t^S\vee \mathcal{F}_t^m$ and $\mathcal{G}_2= \mathcal{F}_t^{noise}$ in Eq.(\ref{expectationreduce}), we have
\begin{align}\label{munoise}
E[\mu|\mathcal{F}_t^S\vee \mathcal{F}_t^m\vee \mathcal{F}_t^{W^{noise}}]=E[\mu|\mathcal{F}_t^S\vee \mathcal{F}_t^m].
\end{align}
Eq.(\ref{munoise}) tells us that the noise give totally no information of $\mu$. Using the same method, we can  construct innovation process $\bar{W}$ and sufficient process $Z$, in which no differences exist with or without $W^{noise}$. Thus, (\ref{utilitynoise}) holds. Taking $m=m^1$, we obtain
\begin{align*}
\sup\limits_{\pi \in \mathcal{A}_{m^1}} EU(X_T)=\sup\limits_{\pi \in \mathcal{A}_{m^1,W^{noise}}} EU(X_T).
\end{align*}
Define the process $m^3:=\{m^3_t, t\geq 0\}$:
\begin{align}
m^3_t=\int_0^t \left\{\frac{\rho^2(s)}{\rho^1(s)}dm^1_s+\sqrt{1-\left[\frac{\rho^2(s)}{\rho^1(s)}\right]^2} dW^{noise}_s\right\}.
\end{align}
where $\frac{\rho^2(s)}{\rho^1(s)}:=0$ if $\rho^1(s)=0$, $\rho^2(s)=0$. Then $m^3$ is a Brownian motion with the property that $\rho^3(t):=d\langle m,W \rangle_t=\rho^2(t)$. Using Proposition \ref{Proposition41}, we have
\begin{align*}
\sup\limits_{\pi \in \mathcal{A}_{m^2}} EU(X_T)=\sup\limits_{\pi \in \mathcal{A}_{m^3}} EU(X_T).
\end{align*}
As $\mathcal{A}_{m^3} \subset  \mathcal{A}_{m^1,W^{noise}}$, based on the definition of supreme, we obtain
\begin{align*}
\sup\limits_{\pi \in \mathcal{A}_{m^3}} EU(X_T)\leq \sup\limits_{\pi \in \mathcal{A}_{m^1,W^{noise}}} EU(X_T).
\end{align*}
Thus
\begin{align*}
\sup\limits_{\pi \in \mathcal{A}_{m^1}} EU(X_T) \geq \sup\limits_{\pi \in \mathcal{A}_{m^2}} EU(X_T),
\end{align*}
that is,  $V_{m^1}\geq V_{m^2}$.
\end{proof}
\vskip 5pt

\section*{Appendix C.  Proof of Theorem \ref{thm31}}
\begin{proof}
\vskip 5pt
\begin{align}\label{Y}
dY_t=(\mu-\frac{1}{2}\sigma^2)dt+\sigma dW_t,\ t\geq 0, \ \  Y_0=0.
\end{align}
Define  $n=\{n_t:=\int_0^t q(s)[dW_s-\rho(s)dm_s],\ 0\leq t\leq T\}$.  Assumption \ref{assumption} guarantees that the process $n$ is well-defined.
 Furthermore,  $n$ is  a Brownian motion and orthogonal to the process $m$. And we have the following  orthogonal decomposition of $W$:
 \begin{align*}
dW_t=\rho(t) dm_t+q(t)^{-1} dn_t.
\end{align*}
Using SDE(\ref{Y}), we obtain
\begin{align*}
\left[dY_t-\sigma \rho(t)dm_t+\frac{1}{2}\sigma^2dt\right]=\mu dt+\sigma q(t)^{-1} dn_t.
\end{align*}
Define the posterior distribution:
\begin{align*}
p(u,t)du:=P(\mu \in du | \mathcal{F}_t^S\vee \mathcal{F}_t^m),\ \forall u\in \mathrm{R}.
\end{align*}
Using the same detailed calculation  as in \cite{wonham1964some} , we have
\begin{align}\label{evolution}
&\ \ \  p(u,t)du\\ \nonumber
&= p(u,0)du\exp\left\{-\frac{1}{2}u^2\int_0^t \sigma^{-2}q(s)^2 ds\right.\\ \nonumber
&\left. +u \int_0^t \sigma^{-2}q(t)^2 \left[dY_s-\sigma \rho(s)dm_s+\frac{1}{2}\sigma^2ds\right]\right\} \\ \nonumber
&\bigg/ \int_{\mathrm{R}} p(u,0)du\exp\left \{-\frac{1}{2}u^2\int_0^t \sigma^{-2}q(s)^2 ds\right.\\ \nonumber
&\left.+u \int_0^t \sigma^{-2}q(t)^2 \left[dY_s-\sigma \rho(s)dm_s+\frac{1}{2}\sigma^2ds\right]\right \}.\nonumber
\end{align}
The prior distribution of $\mu$ yields
\begin{align*}
p(u,0)du=\frac{1}{\sqrt{2\pi \sigma_0^2}}e^{-\frac{(u-\mu_0)^2}{2 \sigma_0^2}}.
\end{align*}
Putting it in (\ref{evolution}), we obtain
\begin{align}\label{posterior distribution}
&\hskip 0.3cm p(u,t)du\\ \nonumber
&=\frac{1}{\sqrt{2\pi \frac{\sigma^2}{t_0+\int_0^t q(s)^2 ds}}} \times \\ \nonumber
&\exp\left\{-\frac{\left[u-\left(\frac{1}{2}\sigma^2+
\frac{y_0+\int_0^t q(t)^2 (dY_s-\sigma \rho(s)dm_s)}{t_0+\int_0^t q(s)^2 ds}\right)\right]^2}{2 \frac{\sigma^2}{t_0+\int_0^t q(s)^2 ds}}\right\}.\nonumber
\end{align}
Therefore
\begin{align}
& \ \ \ E[\mu|\mathcal{F}_t^S\vee \mathcal{F}_t^m]=\int_{\mathrm{R}} u p(u,t)du\\ \nonumber
&=\frac{1}{2}\sigma^2+\frac{y_0+\int_0^t q(t)^2 \left[dY_s-\sigma \rho(s)dm_s \right]}{t_0+\int_0^t q(s)^2 ds}.
\end{align}
\end{proof}
\vskip 15pt

\section*{Appendix D.  Proof of Theorem \ref{valuefunction}}
\begin{proof}
\vskip 5pt
(1) CARA case: utility function $U(x)=-\frac{1}{\beta}e^{-\beta x}$. We use the ansatz:
\begin{align*}
V(t,x,z)=U(e^{r(T-t)}x+\psi(t,z)).
\end{align*}
Then the HJB equation  corresponding to $ V(t,x,z)$ becomes the following  PDE of $\psi$:
\begin{align}\label{HJBA2}
0= &-\beta V \left[ \psi_t+\pi^*(z-r)e^{r(T-t)}-\beta \frac{1}{2}\sigma^2 \pi^{*2}e^{2r(T-t)}\right.\\
&\left.  -\beta \sigma^2 \pi^* \frac{1}{\tau(t)}e^{r(T-t)}\psi_z -\beta \frac{1}{2}\sigma^2 (-\frac{1}{\tau})'(t)\psi_z^2+\frac{1}{2}\sigma^2 (-\frac{1}{\tau})'(t)\psi_{zz}\right],\nonumber\
\end{align}
where
\begin{align*}
\pi^*=e^{-r(T-t)}\frac{(z-r)-\beta \sigma^2 \frac{1}{\tau(t)}\psi_z}{\beta \sigma^2}.
\end{align*}
Putting $\pi^*$ into (\ref{HJBA2})  yields
\begin{align}\label{psi}
\psi_t+\frac{\beta\left[ \frac{1}{\beta}(z-r)-\sigma^2 \frac{1}{\tau(t)}\psi_z\right]^2}{2\sigma^2}+
\frac{1}{2}\sigma^2 (-\frac{1}{\tau})'(t)[-\beta \psi^2_z+\psi_{zz}] =0.
\end{align}
We guess that $ \psi  $ has the following form:
\begin{align*}
\psi(t,z)=\frac{1}{2}a(t)(z-r)^2+b(t)(z-r)+c(t).
\end{align*}
Substituting the form of $ \psi$ into Eq.(\ref{psi}), we have
\begin{align}
&\frac{1}{2}a'(t)+\frac{\beta \left[ \frac{1}{\beta}-\sigma^2 \frac{1}{\tau(t)}a(t) \right]^2}{2\sigma^2}\\ \nonumber
&+\frac{1}{2}\sigma^2 (-\frac{1}{\tau})'(t)(-\beta)a(t)^2=0, \ \ a(T)=0,\\ 
&b'(t)+(\cdots)b(t)= 0,\ \ \ \ b(T)=0, \\
&c'(t)+(\cdots)b(t)+\frac{1}{2}\sigma^2 (-\frac{1}{\tau})'(t)(a(t))=0,\ \  c(T)=0.
\end{align}
It follows that $\psi(T,z)=0$ and $b(t)=0$ for $t\in [0,T]$. To get $a$,
 we define
\begin{align*}
  m(t)= \frac{1}{\beta}-\sigma^2 \frac{1}{\tau(t)}a(t).
\end{align*}
Using the ODE of $a(t)$,
\begin{align*}
\tau(t)m'(t)+\beta\left[\tau'(t)-1\right]m(t)^2-\tau'(t)m(t)=0,
\end{align*}
which is a Riccati equation. Let $d(t)=\frac{1}{m(t)}$, the Riccati equation is transformed into the following ODE:
\begin{align*}
d'(t)=-\frac{\tau'(t)}{\tau(t)}d(t)+\frac{\beta[\tau'(t)-1]}{\tau(t)}.
\end{align*}
Thus
\begin{align*}
d(t)&=\beta+\beta(T-t)\frac{1}{\tau(t)}, \\  
m(t)&=\frac{1}{\beta+\beta(T-t)\frac{1}{\tau(t)}},\\
a(t)&=\frac{\tau(t)}{\sigma^2}\left[\frac{1}{\beta}-\frac{1}{\beta+\beta(T-t)\frac{1}{\tau(t)}}\right]\\
&=\frac{1}{\beta \sigma^2}\frac{\tau(t)(T-t)}{\tau(t)+(T-t)},\\
c(t)&=\int_t^T \frac{1}{2}\sigma^2 (-\frac{1}{\tau})'(s)(a(s))ds\\
&=\frac{1}{2\beta}\int_t^T\frac{\tau'(s)(T-s)}{\tau(s)[\tau(s)+(T-s)]}ds.
\end{align*}

(2) As CRRA case is intricacy, we show the proof in three different categories:  The first one is $U(x)=\frac{x^{1-\gamma}}{1-\gamma}$ with the necessary condition that $\frac{t_0}{T}>\frac{1-\gamma}{\gamma}$, the second one with $\frac{t_0}{T} \leq \frac{1-\gamma}{\gamma}$ and the last one is  $U(x)=\ln(x)$.

(i) If $U(x)=\frac{x^{1-\gamma}}{1-\gamma}$ with $\frac{t_0}{T}>\frac{1-\gamma}{\gamma}$,  we  use the ansatz:
\begin{align*}
V(t,x,z)=U(e^{r(T-t)}x)\exp\{\gamma \psi(t,z)\}.
\end{align*}
The HJB equation corresponding to $V(t,x,z)  $ is equivalent to the following:
\begin{align*}\nonumber
0= \gamma V &  \left[ \psi_t+\frac{1-\gamma}{\gamma}\frac{\pi^*}{x}
(z-r)-\frac{1}{2}\sigma^2 (\frac{\pi^*}{x})^2 (1-\gamma)\right. \nonumber\\
&\left.+\sigma^2 \frac{\pi^*}{x}\frac{1-\gamma}{\tau(t)}\psi_z+\frac{1}{2}\sigma^2 (-\frac{1}{\tau})'(t)\left[\gamma\psi^2_z+\psi_{zz}\right]  \right],
\end{align*}
where
\begin{align*}
\pi^*=x\frac{\sigma^2 \frac{1}{\tau(t)}\psi_z+\frac{1}{\gamma}(z-r)}{\sigma^2}.
\end{align*}
Then
\begin{align*}
\psi_t+\frac{(1-\gamma)\left[ \sigma^2 \frac{1}{\tau(t)}\psi_z + \frac{1}{\gamma}(z-r)\right]^2}{2\sigma^2}\\
+\frac{1}{2}\sigma^2 (-\frac{1}{\tau})'(t)\left[\gamma \psi^2_z+\psi_{zz}\right] =0.
\end{align*}
We guess
\begin{align*}
\psi(t,z)=\frac{1}{2}a(t)(z-r)^2+b(t)(z-r)+c(t).
\end{align*}
Similar to that of CARA case,
\begin{align*}
&\frac{1}{2}a'(t)+\frac{(1-\gamma)\left[ \sigma^2 \frac{1}{\tau(t)}a(t)+\frac{1}{\gamma}\right]^2}{2\sigma^2}\\
&+\frac{1}{2}\sigma^2 (-\frac{1}{\tau})'(t)\gamma a(t)^2=0,\  a(T)=0,\\
&b(t)\equiv 0,\ b(T)=0, \\
&c'(t) +\frac{1}{2}\sigma^2 (-\frac{1}{\tau})'(t)(a(t))=0, \ c(T)=0.
\end{align*}
As for the equation of $a$,  first  making a  substitution:
\begin{align*}
m(t)= \sigma^2 \frac{1}{\tau(t)}a(t)+\frac{1}{\gamma},
\end{align*}
  we get a Riccati equation of $m$ as follows:
\begin{align}\label{RRE}
\tau(t)m'(t)+\left[ (1-\gamma)+\gamma\tau'(t)\right] m(t)^2-\tau'(t)m(t)=0.
\end{align}
Using a substitution: $d(t)=\frac{1}{m(t)}$,    the Riccati equation is the following solvable form:
\begin{align*}
d'(t)=-\frac{\tau'(t)}{\tau(t)}d(t)+\frac{(1-\gamma)+\gamma\tau'(t)}{\tau(t)}.
\end{align*}
Solving the last ODE,
\begin{align*}
d(t)=\gamma-(1-\gamma)(T-t)\frac{1}{\tau(t)},\\
 m(t)=\frac{1}{\gamma-(1-\gamma)(T-t)\frac{1}{\tau(t)}}.
\end{align*}
Then
\begin{align*}
a(t)&=\frac{\tau(t)}{\sigma^2}\left[ \frac{1}{\gamma-(1-\gamma)(T-t)\frac{1}{\tau(t)}}-\frac{1}{\gamma} \right]\\
&=\frac{1}{\gamma \sigma^2}\frac{\tau(t)\frac{1-\gamma}{\gamma}(T-t)}{\tau(t)-\frac{1-\gamma}{\gamma}(T-t)},\\
c(t)&=\int_t^T \frac{1}{2}\sigma^2 (-\frac{1}{\tau})'(s)(a(s))ds\\
&=\frac{1}{2\gamma}\int_t^T  \frac{\tau'(s)\frac{1-\gamma}{\gamma}(T-s)}{\tau(s)[\tau(s)-\frac{1-\gamma}{\gamma}(T-s)]}ds.
\end{align*}
Thus we obtain that closed-form of the value function is solved in the first situation. Moreover, if $\gamma>1$, the value function will always be bounded. But if $0<\gamma<1$, the necessary condition that $\frac{t_0}{T}>\frac{1-\gamma}{\gamma}$ assures the boundedness of the value function.\\
\vskip 5pt
(ii) If $U(x)=\frac{x^{1-\gamma}}{1-\gamma}$  with $\frac{t_0}{T} \leq \frac{1-\gamma}{\gamma}$, the value function will be infinite, the above method of ``ansatz" does not work. Here we take the strategy that $\pi_t=k X_t$. We will show that if $k\to \infty$, the utility expectation will tend to infinite.

Indeed, if $\pi_t=k X_t$,  the terminal state  is
\begin{align*}
X_T=e^{rT}x_0\exp\left\{\left[k(\mu-r)-\frac{1}{2}\sigma^2k^2\right]T+\sigma k W_T\right\}.
\end{align*}
As $\mu$ and $W_T$ are independent, we have
\begin{align*}
&E\left\{U(X^k_T)/U(e^{rT}x_0)\right\}\\
=&E\left\{\exp\left[(1-\gamma)(k(\mu-r)-\frac{1}{2}\sigma^2k^2)T+(1-\gamma) \sigma k W_T\right]\right\}\\
=&\exp\left\{(1-\gamma)k(\mu_0-r)T-\frac{1}{2}(1-\gamma)\sigma^2k^2T \right.\\
&\ \ \  \ \ \ +\left. \frac{1}{2}(1-\gamma)^2 \sigma^2 k^2 T+ \frac{1}{2}(1-\gamma)^2 k^2 \sigma^2 T^2 \frac{1}{t_0} \right\}\\
=&\exp\left\{(1-\gamma)k(\mu_0-r)T+\frac{1}{2}(1-\gamma)^2 \sigma^2 k^2 T\left[\frac{T}{t_0}-\frac{\gamma}{1-\gamma}\right] \right\}.
\end{align*}
 It follows that  $EU(X^k_T)\to \infty$ as  $k\to \infty$. Thus, the original problem is meaningless no matter the extra information is included or not. The phenomenon happens when the investor has less relative risk aversion and the ambiguity hidden in drift term is really too much that the extreme situation where the drift term is far beyond the normal life weighted a lot in probability which cause the expectation utility infinite.
Moreover,  if $\frac{T}{t_0}-\frac{\gamma}{1-\gamma}< 0$, then
\begin{align*}
	&\ \ \ \sup_k E\{U(X^k_T)\} \\
	&=U(e^{rT}x_0)\exp\left\{ \frac{1}{2}\frac{T}{\frac{\gamma}{1-\gamma}-\frac{T}{t_0}}\frac{(\mu_0-r)^2}{\sigma^2} \right\}\\
	&=U(e^{rT}x_0) \exp\left\{\gamma \cdot \frac{1}{2}a(0)(\mu_0-r)^2\right\}.
\end{align*}
From which we see that $a(t)$ is a important function to weight the informative clock and remaining time to the maturity. That is why with any extra informative $a(0)$ is the same which will benefit the discussion of value of information later. In this perspective, the calculation of $a(t)$ in fact is not only by trick.
\vskip 5pt
(iii) If $U(x)=\ln(x)$,  we use the ansatz:
\begin{align*}
V(t,x,z)=U(e^{r(T-t)}x)+\psi(t,z).
\end{align*}
Similar to that of (i), we have
\begin{align*}
&\psi_t+(z-r)(\frac{\pi^*}{x})-\frac{1}{2}\sigma^2 (\frac{\pi^*}{x})^2 +\frac{1}{2}(-\frac{1}{\tau})'\psi_{zz}=0,
\\
&\pi^*=x\frac{z-r}{\sigma^2}.
\end{align*}
And
\begin{align}
&\psi(t,z)=\frac{1}{2}a(t)(z-r)^2+c(t),\nonumber\\
&a(t)=\frac{1}{\sigma^2}(T-t),\nonumber\\
&c(t)=\int_t^T \frac{1}{2}\sigma^2 (-\frac{1}{\tau})'(s)(a(s))ds=\frac{1}{2} \int_t^T  \frac{\tau'(s)(T-s)}{\tau^2(s)}ds.
\end{align}
\end{proof}

\section*{Appendix E.  Proof of Proposition \ref{proposition35}}
\begin{proof}
If we want to compare the value of $V_{m^1}$ and $V_{m^2}$, we only need to compare the value of $V^1(0,x_0,\mu_0)$ and $V^2(0,x_0,\mu_0)$.
\vskip 5pt
In fact, $a(0)$ is the value determined by $\tau(0)$ and $T$, which is irrelevant to any extra information in both CARA case and CRRA case (see Proof of Theorem \ref{valuefunction}(3)).% and $\tau(0)$ and $T$ weight the initial informative and remaining time to maturity, respectively.
The only comparison of value function will focus on the value of $c(0)$. We make the comparison of $c(0)$ with different informative clock by some transformations.
\vskip 5pt
In the case of CARA for example, based on Theorem \ref{valuefunction}, we have
\begin{align*}
	c(0)&=\frac{1}{2\beta}\int_0^T\frac{\tau'(s)(T-s)}{\tau(s)\left[\tau(s)+(T-s)\right]}ds\\
	&=\frac{1}{2\beta}\int_0^T\left[\frac{1}{\tau(s)}-\frac{1}{\tau(s)+T-s}\right]\tau'(s)ds\\
	&=\frac{1}{2\beta}\int_0^{\tau(T)} \left[\frac{1}{u}-\frac{1}{u+T-\tau^{-1}(u)}\right]du,
\end{align*}
where the last equation is by the variable transformation $u=\tau(s)$. Then
\begin{align}\label{compareofc}
	& \ \ \ \ \ c^1(0)-c^2(0)\\
	&=\frac{1}{2\beta}\int_{\tau^2(T)}^{\tau^1(T)} \left[\frac{1}{u}-\frac{1}{u+T-(\tau^1)^{-1}(u)}\right]du\\
	&\ \ \ \  +\frac{1}{2\beta}\int_0^{\tau^2(T)} \left[\frac{1}{u+T-(\tau^2)^{-1}(u)}-\frac{1}{u+T-(\tau^1)^{-1}(u)}\right]du\nonumber\\
& \geq 0+0=0, \nonumber
\end{align}
where $c^1(0)$ and $c^2(0)$ are in the value function $V^1$ and $V^2$ respectively.
The second term is deduced from the fact $(\tau^1)^{-1}(u) \leq (\tau^2)^{-1}(u)$, where $\tau^{-1}$ is the reverse function of $\tau$. Then
\begin{align*}
	& \ \ \ \ V^1\left(0,x_0,\mu_0\right)\\
	&=U\left(e^{rT}x_0+\frac{1}{2}a(0)(\mu_0-r)^2+c^1(0)\right)\\
	&\geq U\left(e^{rT}x_0+\frac{1}{2}a(0)(\mu_0-r)^2+c^2(0)\right)\\
	&=V^2\left(0,x_0,\mu_0\right).
\end{align*}
This is what we desire.
\vskip 5pt
Similarly, the conclusion can still be proven by the same way as in CRRA case. When $\gamma\neq 1$.
\begin{align*}
	c(0)&=\frac{1}{2\gamma}\int_0^T  \frac{\tau'(s)\frac{1-\gamma}{\gamma}(T-s)}{\tau(s)\left[\tau(s)-\frac{1-\gamma}{\gamma}(T-s)\right]}ds\\
	     &=\frac{1}{2\gamma}\int_0^T\left[\frac{1}{\tau(s)-\frac{1-\gamma}{\gamma}(T-s)}-\frac{1}{\tau(s)}\right]\tau'(s)ds\\
	     &=\frac{1}{2\gamma}\int_0^{\tau(T)}\left[\frac{1}{u-\frac{1-\gamma}{\gamma}[T-\tau^{-1}(u)]}-\frac{1}{u}\right]du.
\end{align*}
\vskip 5pt
Similarly, If $\gamma<1$, we have $c^1(0)\geq c^2(0)$.  But if $\gamma>1$, we have $c^1(0)\leq c^2(0)$. But it doesn't matter for the comparison of value function if we notice that the utility functions $U(x)$ are positive or negative with  $\gamma <1$ or $\gamma >1$ . Hence, we obtain
\begin{align*}
	& \ \ \ \ V^1\left(0,x_0,\mu_0\right)\\
	&=U(e^{rT}x_0)\exp\left\{\gamma \cdot \left[\frac{1}{2}a(0)(\mu_0-r)^2+c^1(0)\right]\right\}\\
	&\geq U(e^{rT}x_0)\exp\left\{\gamma \cdot \left[\frac{1}{2}a(0)(\mu_0-r)^2+c^2(0)\right]\right\} \\
	&=V^2\left(0,x_0,\mu_0\right).
\end{align*}
At last, when $\gamma=1$,
\begin{align*}
	c(0)&=\frac{1}{2} \int_0^T  \frac{\tau'(s)(T-s)}{\tau^2(s)}ds\\
	&=\frac{1}{2} \int_0^{\tau(T)}  \frac{T-\tau^{-1}(u)}{u^2}du
\end{align*}
It is  easy to see  the relation with $c^1(0)\geq c^2(0)$,  we also have
\begin{align*}
		& \ \ \ \ V^1\left(0,x_0,\mu_0\right)\\
		&=rT +\ln(x_0)+\frac{1}{2}a(0)(\mu_0-r)^2+c^1(0)\\
	&\geq rT +\ln(x_0)+\frac{1}{2}a(0)(\mu_0-r)^2+c^2(0)\\
	&=V^2\left(0,x_0,\mu_0\right).
\end{align*}
Thus, we end the proof.
\end{proof}

\section*{Appendix F. Proof of Theorem \ref{MTHM}}
\begin{proof}
In the case of CARA utility function, if we define
\begin{align*}
\mathfrak{L}(t,\tau,\tau'):=\frac{1}{2\beta} \frac{(T-t)\tau'(t)}{\tau(t)[\tau(t)+T-t]}- \lambda [\tau'(t)-1]^2,
\end{align*}
then  Problem (\ref{FMP}) is equivalent to solving $\sup\limits_{\tau}\int_0^T \mathfrak{L}(t,\tau(t),\tau'(t))dt$. In fact, based on the method of variations, we  obtain that the optimal functional has to satisfy the necessary condition as the Euler-Lagrange equation:
\begin{align*}
\frac{d}{dt}\frac{\partial}{\partial \tau'}\mathfrak{L}(t,\tau(t),\tau'(t))=\frac{\partial}{\partial \tau}\mathfrak{L}(t,\tau(t),\tau'(t)).
\end{align*}
It is simplified as
\begin{align}
\tau''(t)+\frac{1}{4\beta \lambda} \frac{1}{[\tau(t)+T-t]^2}=0.
\end{align}\\

\vskip 5pt

In the case of CRRA utility function with $\gamma\neq 1$, it will become a little complex because the Euler-Lagrange method will become invalid due to the nonlinear part of the integration. We use the method of  duality to make the integration linear.
 \vskip 5pt
Noticing that $\frac{x_0}{C_2}\exp(x)$ has the duality function $y-y\ln(\frac{C_2 y}{x_0})$, we have
\begin{align}\label{CRRAVall}
\frac{x_0}{C_2}\exp(x)=\sup_{y>0} \left\{y-y\ln(\frac{C_2 y}{x_0})+xy\right\}.
\end{align}
By using Eq.(\ref{CRRAVal}) and Eq.(\ref{CRRAVall}), the objective function $Net(\tau)$ is
\begin{align*}
Net(\tau)&=\sup_{y>0} \left\{y-y\ln(\frac{C_2 y}{x_0})-x_0\right.\\
&\left.+y \left[\int_0^T  \frac{1}{2(1-\gamma)} \frac{\tau'(t)\frac{1-\gamma}{\gamma}(T-t)}{\tau(t)[\tau(t)-\frac{1-\gamma}{\gamma}(T-t)]}dt\right] \right.\\
&\left. -\int_0^T \lambda [\tau'(t)-1]^2 dt\right\}.
\end{align*}
If we define the
\begin{align*}
& \ \  \mathfrak{L}^y(t,\tau,\tau')\\
&:=\frac{y}{2(1-\gamma)} \frac{\tau'(t)\frac{1-\gamma}{\gamma}(T-t)}{\tau(t)[\tau(t)-\frac{1-\gamma}{\gamma}(T-t)]}- \lambda [\tau'(t)-1]^2,
\end{align*}
then
\begin{align*}
Net(\tau)=\sup_{y>0} \left\{y-y\ln(\frac{C_2 y}{x_0})-x_0+ \int_0^T \mathfrak{L}^y(t,\tau,\tau') dt\right\}.
\end{align*}
As such,
\begin{align*}
& \ \ \ \ \sup_{\tau} Net(\tau)\\
&=\sup_{\tau} \sup_{y>0} \left\{y-y\ln(\frac{C_2 y}{x_0})-x_0+ \int_0^T \mathfrak{L}^y(t,\tau,\tau') dt\right\}\\
&= \sup_{y>0} \sup_{\tau}\left\{y-y\ln(\frac{C_2 y}{x_0})-x_0+ \int_0^T \mathfrak{L}^y(t,\tau,\tau') dt\right\}.
\end{align*}
Then we  fix $y>0$ and solve
\begin{align*}
\sup_{\tau}\left\{y-y\ln(\frac{C_2 y}{x_0})-x_0+ \int_0^T \mathfrak{L}^y(t,\tau,\tau') dt\right\}
\end{align*}
by Euler-Lagrange equation with
\begin{align*}
\frac{d}{dt}\frac{\partial}{\partial \tau'}\mathfrak{L}^y(t,\tau(t),\tau'(t))=\frac{\partial}{\partial \tau}\mathfrak{L}^y(t,\tau(t),\tau'(t)).
\end{align*}
It follows that
\begin{align}
\tau''(t)+\frac{y}{4\gamma \lambda} \frac{1}{[\tau(t)-\frac{1-\gamma}{\gamma}(T-t)]^2}=0.
\end{align}	
Moreover, for $U(x)=\ln(x)$, as the calculation is similar and the outcome coincide (\ref{crrael}) if we treat $\gamma=1$,   we do not need to carefully distinguish the case whether $\gamma$ equals to one.
\end{proof}

\end{document}